\definecolor{acolor}{RGB}{0,192,0}
\definecolor{gcolor}{RGB}{255,0,0}
\definecolor{rcolor}{RGB}{255,0,0}
\definecolor{bcolor}{RGB}{10,10,255}
\newcommand{\ignore}[1]{}
\newtheorem*{claim}{Claim}
\newtheorem*{defi}{Definition}
\newtheorem{nthm}{Theorem}[]
\newtheorem{nlemma}{Lemma}
\newtheorem{ndefi}{Definition}
\renewcommand{\vec}[1]{\boldsymbol{\mathbf{#1}}}
\renewcommand{\(}{\left(}
\renewcommand{\)}{\right)}
\renewcommand{\[}{\left[}
\renewcommand{\]}{\right]}
\def\@fpheader{~}\makeatother
\newcommand{\Eqref}[1]{Eq.~\eqref{#1}}
\newcommand{\secref}[1]{Sec.~\ref{#1}}
\newcommand{\figref}[1]{Fig.~\ref{#1}}
\def\b{\beta}
\def\d{\delta}
\def\g{\gamma}
\def\r{\rho}
\def\y{\psi}
\def\er{\eqref}
\def\pa{\partial}
\def\({\left(}
\def\){\right)}
\def\[{\left[}
\def\]{\right]}
\def\<{\langle}
\def\>{\rangle}
\def\tr{\operatorname{tr}}
\renewcommand{\(}{\left(}
\renewcommand{\)}{\right)}
\renewcommand{\[}{\left[}
\renewcommand{\]}{\right]}
\renewcommand{\(}{\left(}
\renewcommand{\)}{\right)}
\renewcommand{\[}{\left[}
\renewcommand{\]}{\right]}
 \let\b=\beta \let\g=\gamma \let\d=\delta 
     \let\r=v
\let\sg=\sigma   
\let\y=\psi
\let\pa=\partial
\title{A Modified Cosmic Brane Proposal for Holographic Renyi Entropy}
\author[1]{Xi Dong,}
\emailAdd{xidong@ucsb.edu}
\author[2,3]{Jonah Kudler-Flam,}
\emailAdd{jkudlerflam@ias.edu}
\author[1,4]{and Pratik Rath}
\emailAdd{rath@ucsb.edu}
\affiliation[1]{Department of Physics, University of California, Santa Barbara, CA 93106, USA}
\affiliation[2]{School of Natural Sciences, Institute for Advanced Study, Princeton, NJ 08540 USA}
\affiliation[3]{Princeton Center for Theoretical Science, Princeton University, Princeton, NJ~08540, USA}
\affiliation[4]{Center for Theoretical Physics and Department of Physics,
University of California, Berkeley, CA 94720, USA}
\abstract{We propose a new formula for computing holographic Renyi entropies in the presence of multiple extremal surfaces. 
Our proposal is based on computing the wave function in the basis of fixed-area states and assuming a diagonal approximation for the Renyi entropy.
For Renyi index $n\geq1$, our proposal agrees with the existing cosmic brane proposal for holographic Renyi entropy.
For $n<1$, however, our modified cosmic brane proposal predicts a new phase with leading order (in Newton's constant $G$) corrections to the original cosmic brane proposal, even far from entanglement phase transitions and when bulk quantum corrections are unimportant.
Recast in terms of optimization over fixed-area states, the difference between the two proposals can be understood to come from the order of optimization: for $n<1$, the original cosmic brane proposal is a minimax prescription whereas our proposal is a maximin prescription.
We demonstrate the presence of such leading order corrections using illustrative examples.
In particular, our proposal reproduces existing results in the literature for the PSSY model and high-energy eigenstates, providing a universal explanation for previously found leading order corrections to the $n<1$ Renyi entropies. 
}
\begin{document}
 
% Title Page
 
%\input{editionlegend.tex}
\maketitle

\section{Introduction}

Entanglement plays a fundamental role in the emergence of spacetime in holographic theories of quantum gravity \cite{VanRaamsdonk:2010pw}.
The prime discovery establishing this connection is the Ryu-Takayanagi (RT) formula \cite{Ryu:2006bv,2006JHEP...08..045R,Hubeny:2007xt} in the AdS/CFT correspondence \cite{Maldacena:1997re}. It states that at leading order in Newton's constant $G$, we have
\begin{equation}
    S(R) = \frac{A\(\g_R\)}{4G},
\end{equation}
where $S(R)=-\tr(\rho_R \log \rho_R)$ is the entanglement entropy of the density matrix for a boundary subregion $R$, $A$~represents the area, and $\g_R$ is the RT surface, the bulk extremal surface anchored to $\pa R$ (and homologous to $R$) with minimal area. 
Quantum corrections to this formula are well understood \cite{Faulkner:2013ana,Engelhardt:2014gca} and play an important role in, e.g., black hole evaporation \cite{Penington:2019npb,Almheiri:2019psf,Penington:2019kki,Almheiri:2019qdq}.

The entanglement entropy belongs to a one-parameter family of Renyi entropies defined as
\begin{equation}
    S_n(R) = \frac{1}{1-n}\log \tr\(\rho_R^n\).
\end{equation}
Another related one-parameter family is that of the refined Renyi entropies \cite{Dong:2016fnf} 
 defined as
\begin{equation}\label{eq:refined}
    \tilde{S}_n(R) = n^2\pa_n\(\frac{n-1}{n}S_n(R)\).
\end{equation}
The entanglement entropy arises in the $n\to 1$ limit of either of these families.
Both $S_n(R)$ and $\tilde{S}_n(R)$ measure the bipartite entanglement between $R$ and its complementary subregion $\bar{R}$, and provide more detailed information than the entanglement entropy itself.
An arbitrary density matrix can be thought of as a thermal state for the modular Hamiltonian, $\rho = e^{ -H_{\textrm{mod.}}}$, in which case the Renyi and refined Renyi entropies essentially probe the system at different temperatures given by $\frac{1}{n}$.
Thus, it is of interest to understand the holographic dual of the Renyi entropy and the refined Renyi entropy to obtain a fine grained understanding of the entanglement spectrum. 

The Renyi entropies at integer $n$ can be computed using the replica trick in the boundary CFT \cite{Calabrese:2004eu}.
The insight of Ref.~\cite{Lewkowycz:2013nqa}, as we shall review in \secref{sub:cosmic}, was to propose that the corresponding dominant bulk gravitational saddle is replica symmetric.
Then, quotienting the bulk geometry by the replica symmetry, one obtains a geometry with an additional conical defect of opening angle $\frac{2\pi}{n}$ anchored to $\pa R$.
Such a conical defect could equivalently be interpreted as being induced by the insertion of a ``cosmic brane'' of appropriate tension \cite{Dong:2016fnf,Lewkowycz:2013nqa}.
This cosmic brane proposal provides a natural continuation away from integer $n$, and the RT formula follows from it in the limit $n\to1$.
Moreover, it was shown in Ref.~\cite{Dong:2016fnf} that the refined Renyi entropy is then computed by the area of the cosmic brane in this new spacetime, providing a generalization of the RT formula.
Henceforth, we will refer to either of the above proposals for Renyi entropy and refined Renyi entropy as the \textit{original cosmic brane proposal}.

In this paper, we will demonstrate that while the original cosmic brane proposal is correct in many situations, it can fail even at leading order (in $G$) in large regions of parameter space.
In particular, we will show that such corrections appear quite generically for Renyi index $n<1$ in the presence of multiple extremal surfaces.
Such corrections have previously been noticed by Refs.~\cite{Dong:2021oad,Akers:2022max} in the PSSY model of black hole evaporation \cite{Penington:2019kki}.
Moreover, the results of Ref.~\cite{Murthy:2019qvb} interpreted in a holographic context also imply such corrections for high-energy eigenstates.
We will present a \textit{modified cosmic brane proposal} that provides a universal explanation of such leading order corrections.

Our modified cosmic brane proposal is based on expanding the holographic state in a basis of fixed-area states where the areas of all extremal surfaces homologous to $R$ have small fluctuations \cite{Akers:2018fow,Dong:2018seb,Dong:2019piw}.
We review the decomposition of smooth holographic states into a fixed-area basis in \secref{sub:wave}, reformulating the original cosmic brane proposal in this language.
In the presence of two extremal surfaces $\gamma_1$ and $\gamma_2$, the original cosmic brane proposal (ignoring bulk quantum corrections) takes the form
\begin{equation}\label{eq:cosmic_A1}
    S_n^{C}(R) = \begin{cases}
        \displaystyle \frac{1}{1-n}\max_{i=1,2}\,\max_{A_1,A_2}\( n\log p\(A_1,A_2\) +\frac{(1-n)}{4G}A_i\)&n\geq 1,\\
        \displaystyle \frac{1}{1-n}\min_{i=1,2}\,\max_{A_1,A_2}\( n\log p\(A_1,A_2\) +\frac{(1-n)}{4G}A_i\)&n<1.
    \end{cases}
\end{equation}
where the function being optimized is the contribution to the $n$th Renyi entropy from the fixed-area saddle with areas $A_1,A_2$ where the replica gluing is performed around $\gamma_i$.
Here $p\(A_1,A_2\)$ is the probability distribution over areas of the extremal surfaces.

Fixed-area states provide a convenient basis for decomposing holographic entropy calculations.
Via their connection to random tensor networks \cite{Hayden:2016cfa}, they allow us to use other tools from quantum information theory to analyze holographic entanglement measures.
This connection has already been exploited to discover many new results for holography (see e.g.\ Refs.~\cite{Dong:2020iod,Marolf:2020vsi,Akers:2020pmf,2021PhRvL.126q1603K,Dong:2021oad,Akers:2021pvd,Akers:2022zxr,Akers:2023obn,akers2023reflected}).
Our findings in this paper provide another such example of taking inspiration from random tensor networks to learn about quantum gravity.

In \secref{sec:renyi}, we use the wave function in the fixed-area basis and assume a diagonal approximation to arrive at our modified cosmic brane proposal.
In summary, our modified cosmic brane proposal for the Renyi entropy (ignoring bulk quantum corrections) is
\begin{equation}\label{eq:resultshort}
    S_n^{MC}(R) = \displaystyle \frac{1}{1-n}\max_{A_1,A_2} \( n\log p\(A_1,A_2\) +\frac{(1-n)}{4G} \min\[A_1, A_2\]\).
\end{equation}
To compare with the original cosmic brane proposal \er{eq:cosmic_A1}, we rewrite \er{eq:resultshort} as
\begin{align}\label{eq:result}
    S_n^{MC}(R) = \begin{cases}
        \displaystyle \frac{1}{1-n}\max_{A_1,A_2} \,\max_{i=1,2}\( n\log p\(A_1,A_2\) +\frac{(1-n)}{4G}A_i\)&n\geq 1,\\
        \displaystyle \frac{1}{1-n}\max_{A_1,A_2} \,\min_{i=1,2}\( n\log p\(A_1,A_2\) +\frac{(1-n)}{4G}A_i\)&n<1.
    \end{cases}
\end{align}
Combining \Eqref{eq:result} with \Eqref{eq:refined} also leads to a modified cosmic brane proposal for the refined Renyi entropy,
\begin{equation}\label{eq:result2}
   \tilde{S}^{MC}_n\(R\) = \frac{\bar{A}_{\overline{i}}(n)}{4G}, 
\end{equation}
where $\{\bar{i},\bar{A_1}(n),\bar{A_2}(n)\}$ is the location of the optimum in \Eqref{eq:result} for a given value of $n$ and we have assumed a continuous probability distribution for \Eqref{eq:result2} to hold.
We expect our proposal \Eqref{eq:result}--\Eqref{eq:result2} to apply at leading order in $G$ for arbitrary $n$ when bulk quantum corrections can be ignored.
In some situations, it is also possible to understand bulk quantum corrections as we will discuss in \secref{sec:renyi}.

We can now compare our proposal to the original cosmic brane proposal.
The key difference between the two proposals \Eqref{eq:cosmic_A1} and \Eqref{eq:result} is the order of optimization.
For $n\geq1$, we have two maximizations whose order can be swapped and thus the two proposals always agree.
On the other hand, for $n<1$, the original cosmic brane proposal is a minimax prescription whereas the modified cosmic brane proposal is a maximin prescription.
Thus, in general, we can only conclude that $S_n^{MC}(R)\leq S_n^{C}(R)$, but they need not agree.

In \secref{sec:comp}, we provide necessary and sufficient conditions for an agreement between the two proposals.
The original cosmic brane proposal considered two candidate saddles, and each saddle is smooth everywhere except that it has a cosmic brane at $\gamma_i$ which sources a conical defect of opening angle $\frac{2\pi}{n}$.
We show that the two proposals agree whenever the cosmic brane sits at the minimal surface, i.e., when $A_i\leq A_{3-i}$ (evaluated in the corresponding saddle).
On the other hand, it is possible that neither of the saddles satisfies this constraint.
In such a case, the optimum in \Eqref{eq:result} may either be achieved by a subleading saddle in the original cosmic brane proposal or at the entanglement phase transition boundary $A_1=A_2$, either of which results in leading order corrections to the original cosmic brane proposal.
As we explain in more detail later, the configurations at $A_1=A_2$ involve distributing the cosmic brane tension over both candidate RT surfaces, and these new configurations were not considered in the original cosmic brane proposal.

To illustrate the presence of such corrections, we work out the example of $p\(A_1,A_2\)$ being a Gaussian distribution in \secref{sec:gaussian}.
In the simplest such setting, we demonstrate that the modified cosmic brane proposal agrees with the original cosmic brane proposal for $n\geq1$ as expected, whereas leading order corrections arise for $n<1$.
The detailed analysis of Renyi entropies in an arbitrary Gaussian distribution is provided in Appendix~\ref{app:gaussian}.

In \secref{sec:agree}, we provide evidence for our proposal by reproducing existing results in the literature for the holographic Renyi entropy.
We focus on results where leading order corrections (for $n<1$) have been previously found using methods different from ours.
We compute the Renyi entropies at arbitrary $n$ in the PSSY model in \secref{sub:PSSY} and in high-energy eigenstates in \secref{sub:eig}, precisely reproducing the known results in each of these cases with our modified cosmic brane proposal.

We discuss various future directions in \secref{sec:disc}.
A particular application of the modified cosmic brane proposal will be to compute the entanglement negativity in AdS/CFT, which we analyze in an accompanying paper \cite{neg}.
We also discuss the validity of our diagonal approximation and the possibility of replica symmetry breaking.

\section{Original Cosmic Brane Proposal}
\subsection{Review of the Proposal}
\label{sub:cosmic}

At integer $n$, the replica trick can be used to compute Renyi entropies for a subregion $R$ in the boundary CFT \cite{Calabrese:2004eu}.
The replica trick involves computing $\tr\(\rho_R^n\)$ in terms of a partition function $Z_n$ of the CFT involving $n$ copies of the original system glued cyclically about $\pa R$.
By the AdS/CFT dictionary \cite{Gubser:1998bc,Witten:1998qj}, in the saddle point approximation we have
\begin{equation}
    Z_n = \exp\(-I\[g_n\]\),
\end{equation}
where $I$ is the gravitational action and $g_n$ is a geometry that solves the equations of motion and satisfies the asymptotic boundary conditions defined by the replica trick path integral.
In this paper, we will restrict to discussing Einstein gravity, but we expect our results to generalize to higher-derivative theories using the ideas of Ref.~\cite{Dong:2013qoa,Dong:2019piw}.
At this point, we are ignoring bulk quantum corrections and they will be discussed briefly in \secref{sec:renyi}.

Ref.~\cite{Lewkowycz:2013nqa} proposed that the dominant saddle $g_n$ respects the $\mathbb{Z}_n$ replica symmetry of the boundary path integral that cyclically permutes the $n$ boundary copies. 
This insight allows one to quotient the bulk geometry by this $\mathbb{Z}_n$ symmetry and provides a natural continuation away from integer $n$, i.e., including the normalization factor we have
\begin{equation}\label{eq:LM}
    \tr\(\rho_R^n\) = \exp\left(-n (I\[\hat{g}_n\] - I\[{g}_1\]) \right),
\end{equation}
where $\hat{g}_n$ is a solution to the equations of motion with a conical defect of opening angle $\frac{2\pi}{n}$ anchored to $\pa R$ in addition to satisfying the asymptotic boundary conditions defining the original state.\footnote{The action $I\[\hat{g}_n\]$ excludes an explicit contribution from the conical defect \cite{Lewkowycz:2013nqa}.}
Equivalently, this conical defect can be interpreted as arising from the insertion of a cosmic brane of tension $T_n=\frac{n-1}{4nG}$ \cite{Lewkowycz:2013nqa}.
With this understanding, \Eqref{eq:LM} provides a natural continuation to non-integer values of $n$\footnote{Note that the continuation need not be analytic since there may generally be phase transitions in the $G\to0$ limit, in which case one cannot apply Carlson's theorem as one could for the boundary CFT at finite central charge.} and defines the original cosmic brane proposal for the holographic Renyi entropy for arbitrary $n$ at leading order in $G$, i.e.,
\begin{equation}\label{eq:LM2}
    S_n(R) = \frac{n}{n-1} \(I\[\hat{g}_n\]-I\[g_1\]\).
\end{equation}
In the limit $n\to1$, the cosmic brane becomes tensionless and the RT formula follows from \Eqref{eq:LM2}.

It was proposed in Ref.~\cite{Dong:2016fnf} that the refined Renyi entropy satisfies a natural generalization of the RT formula.
Using \Eqref{eq:LM2}, it was shown that 
\begin{equation}\label{eq:refined_holo}
    \tilde{S}_n\(R\) = \frac{A\(\g_{R,n}\)}{4G},
\end{equation}
where $\g_{R,n}$ is the location of the cosmic brane of tension $T_n$ in the geometry $\hat{g}_n$.

It is important to note that the naive version of the original cosmic brane proposal is already subtle for $n<1$ in the presence of multiple extremal surfaces that serve as candidate RT surfaces for subregion $R$.
When there are multiple extremal surfaces, at integer $n$ one naturally picks the solution with the least bulk action $I\[\hat{g}_n\]$ and it is natural to extend this rule to the continuation.
In the limit $n\to1^+$, this picks out the minimal area surface, resulting in the RT formula.

However, if we instead considered the limit $n\to1^-$, and if the original cosmic brane proposal still chooses the solution with the least action $I\[\hat{g}_n\]$, it would pick out the maximal area surface among the two candidates, leading to a physically unreasonable answer.
Thus, for $n<1$, it is natural to insist that the original cosmic brane proposal pick out the solution with the largest action instead.
With this rule, there is so far no obvious problem with the formula and we will take this to define the original cosmic brane proposal in the presence of multiple extremal surfaces.
Having said so, in this paper, we will demonstrate that even with this updated  rule, the original cosmic brane proposal can have corrections at leading order in $G$ in the presence of multiple extremal surfaces.

\subsection{Reformulation in Terms of Fixed-Area States}
\label{sub:wave}

Having discussed the original cosmic brane proposal in terms of the gravitational path integral, we will now reformulate it in terms of fixed-area states.
This will prove convenient later to compare with our modified proposal.

We start by considering a holographic state $\ket{\y}$ defined by a Euclidean path integral construction in the boundary CFT. 
For our purpose, we will be interested in considering a subregion $R$ of the boundary such that there are multiple candidate RT surfaces in the state $\ket{\y}$.
While we expect our formalism to go through in a straightforward manner for more than two surfaces, it suffices for illustrative purposes to restrict to having two extremal surfaces anchored to $\pa R$ (and homologous to $R$), labelled $\g_1$ and $\g_2$.

Following Refs.~\cite{Dong:2020iod,Marolf:2020vsi,Akers:2020pmf}, we will decompose the state $\ket{\y}$ into an orthonormal basis of fixed-area states $\ket{A_1,A_2}$ which, as we shall discuss, are states where the areas $A_{1,2}$ of surfaces $\g_{1,2}$ are sharply peaked \cite{Dong:2018seb,Akers:2018fow,Dong:2019piw}.
In this basis, we have
\begin{align}
\label{eq:decomposition}
    \ket{\y} = \sum_{A_1,A_2} \sqrt{p\(A_1,A_2\)} \,e^{i\varphi(A_1,A_2)}\ket{A_1,A_2},
\end{align}
where we have explicitly separated out the phases $\varphi(A_1,A_2)$ so that $p(A_1,A_2)$ is real and can be interpreted as the probability distribution of the areas of the two extremal surfaces.

Smooth holographic states (such as $\ket{\y}$) are defined by a path integral with asymptotic boundary conditions. 
Their corresponding fixed-area states are defined by an identical path integral with an additional boundary condition that fixes the area of the RT surface to be a specified value \cite{Akers:2018fow,Dong:2018seb,Dong:2019piw}. 
Doing so requires the opening angle to adjust in response and thus generically introduces a conical defect at the surface.

\begin{figure}[t]
\centering
\includegraphics[scale=0.7]{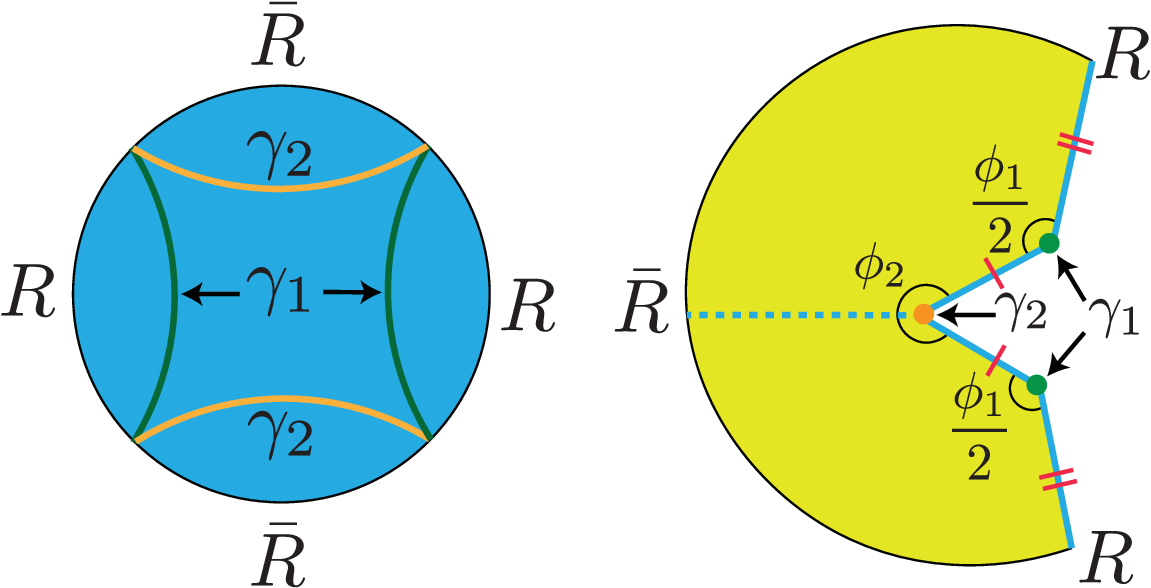}
\caption{(left): Cauchy slice of a geometry with subregion $R$ chosen to be two disjoint intervals. There are two candidate RT surfaces $\g_1$ (green) and $\g_2$ (orange). (right): The fixed-area Euclidean saddle has conical defects with opening angle $\phi_{1,2}$ at the extremal surfaces $\g_{1,2}$. The Cauchy slice is marked in blue, the bulk region to be cut open to obtain a single copy of $\rho_R\(A_1,A_2\)$ is solid, and the rest is dashed. \label{fig:fixed}}
\end{figure}

In our case of interest, there are two such candidate surfaces, $\g_1$ and $\g_2$. 
In general, $\g_1$ and $\g_2$ can be separately specified in a gauge invariant manner. 
For example, when $R$ consists of two intervals, we can specify them to be extremal surfaces of a given homotopy class, i.e., connected or disconnected with respect to $R$ (see \figref{fig:fixed}). 
In general, we will assume that $\g_1$ is the outermost surface, closest to $R$. 
Thus, our fixed-area saddles will satisfy all the asymptotic boundary conditions, satisfy the equations of motion everywhere away from $\g_1$ and $\g_2$, and have areas $A\(\g_1\)=A_1$ and $A\(\g_2\)=A_2$. 
As noted earlier, in general, this will introduce conical defects with opening angles $\phi_{1,2}$ at $\g_{1,2}$ as shown in \figref{fig:fixed}. 

The probability weights $p\(A_1,A_2\)$ can then be computed using the gravitational path integral. Namely, we have \cite{Marolf:2020vsi}
\begin{align}
\begin{aligned}
       p\(A_1,A_2\)&=\frac{\<\y|\Pi_{A_1,A_2}|\y\>}{\<\y|\y\>}\\
    &=\exp\(I\[g_{\y}\]-I\[g_{A_1,A_2}\]\),\label{eq:prob} 
\end{aligned}
\end{align}
where $\Pi_{A_1,A_2}$ is a projector onto areas $\(A_1,A_2\)$, $g_{A_1,A_2}$ is the corresponding fixed-area saddle and $g_{\y}$ is the original, non-fixed area saddle for $\ket{\psi}$.

For states prepared by simple path integrals, we expect $p\(A_1,A_2\)$ to be sufficiently smooth and peaked at a single value of $\(A_1,A_2\)$.
Such states are compressible in the sense defined in Ref.~\cite{Akers:2020pmf}.
Thus, although our discussion is amenable to the inclusion of incompressible states, the corrections to the Renyi entropy that we will find are independent from those discovered for the entanglement entropy of states with incompressible bulk matter.

Note that the above discussion can be extended to quantum extremal surfaces.
Fixed-area states can be defined in an analogous fashion for such surfaces.
For classical extremal surfaces, the extremality condition plays an important role in allowing the opening of a conical defect at the location of the surface as required to fix the area.
For quantum extremal surfaces, quantum extremality plays the same role allowing the equations of motion to be satisfied near the location of the defect due to a matter contribution that stabilizes the classical non-extremality \cite{Dong:2017xht}.

We are now ready to reformulate the original cosmic brane proposal in terms of fixed-area states.
\begin{ndefi}\label{CBP}
The original cosmic brane proposal for the Renyi entropy is given by
\begin{align}\label{eq:cosmic_A}
    S_n^{C}(R) = \begin{cases}
        \displaystyle \frac{1}{1-n} \max_{i=1,2}\,\max_{A_1,A_2}\( n\log p\(A_1,A_2\) +\frac{(1-n)}{4G}A_i\)&n\geq1,\\
        \displaystyle \frac{1}{1-n} \min_{i=1,2}\,\max_{A_1,A_2}\( n\log p\(A_1,A_2\) +\frac{(1-n)}{4G}A_i\)&n<1.
    \end{cases}
\end{align}
\end{ndefi}
To see that \Eqref{eq:cosmic_A} is equivalent to the formulation of the original cosmic brane proposal reviewed in \secref{sub:cosmic}, we can use \Eqref{eq:prob} to find the conditions for attaining the inner maximum in \Eqref{eq:cosmic_A}:
e.g. for $i=1$, we have
\begin{align}\label{eq:max1}
    \frac{\pa I\[g_{A_1,A_2}\]}{\pa A_1}&= \frac{1-n}{4 n G},\\
    \frac{\pa I\[g_{A_1,A_2}\]}{\pa A_2}&=0,\label{eq:max2}
\end{align}
and the roles of $A_1$ and $A_2$ are swapped for $i=2$.

In general, the action of the fixed-area saddle can be divided into two parts: a localized contribution from the surfaces $\g_{1,2}$ and the remaining contribution from the rest of the spacetime away from the surfaces \cite{Dong:2018seb}; i.e,
\begin{equation}
    I\[g_{A_1,A_2}\] = I_{away}\[g_{A_1,A_2}\]+\frac{\(\phi_1-2\pi\)A_1}{8\pi G}+\frac{\(\phi_2-2\pi\)A_2}{8\pi G},
\end{equation}
where we remind the reader that $\phi_i$ is the opening angle at surface $\g_i$ and thus, the localized contributions arise from Ricci scalar delta functions due to the presence of conical defects at $\g_{1,2}$.
To take the derivative with respect to $A_i$ for a smooth holographic state, we follow the argument of Ref.~\cite{Dong:2018seb} and note that differentiating $I_{away}\[g_{A_1,A_2}\]$ with respect to $\phi_i$ gives $-A_i/8\pi G$, and thus its Legendre transform $I\[g_{A_1,A_2}\]$ satisfies
\begin{equation}\label{eq:der}
    \frac{\pa I\[g_{A_1,A_2}\]}{\pa A_i} = \frac{\phi_i-2\pi}{8\pi G}.
\end{equation}

Using this, \Eqref{eq:max1} and \Eqref{eq:max2} imply that $\phi_1=\frac{2\pi}{n}$ and $\phi_2=2\pi$.
Again, by symmetry, we have $\phi_1=2\pi$ and $\phi_2=\frac{2\pi}{n}$ for $i=2$.
These are precisely the candidate saddles that one obtains from the original cosmic brane proposal, with \Eqref{eq:max1} indicating the presence of a cosmic brane of tension $T_n$ at $\g_i$, as discussed in \secref{sub:cosmic}.
Finally, as discussed in \secref{sub:cosmic}, the maximization (minimization) over $i$ for $n\geq1$ ($n<1$) results in the optimum configuration chosen in the original cosmic brane proposal.

While we showed \Eqref{eq:cosmic_A} to be equivalent to the original cosmic brane proposal for holographic states prepared by a smooth gravitational path integral, \Eqref{eq:cosmic_A} can be taken as the definition of the original cosmic brane proposal for more general probability distributions over fixed-area states.
In fact, this is natural since the effect of inserting a cosmic brane of tension $T_n$ is captured by \Eqref{eq:max1} and \Eqref{eq:max2} quite generally, even for non-smooth holographic states like fixed-area states.

\section{Modified Cosmic Brane Proposal}
\label{sec:renyi}

To obtain our modified cosmic brane proposal, we will directly use the wave function in the fixed-area basis to compute the Renyi entropy for subregion $R$. 
The density matrix $\rho_R$ is given by
\begin{align}
\begin{aligned}
\rho_R&=\sum_{A_1,A_2,A_1',A_2'}\sqrt{p\(A_1,A_2\) p\(A_1',A_2'\)}e^{i(\varphi(A_1,A_2)-\varphi(A_1',A_2'))}\tr_{\bar{R}}\(|A_1,A_2\>\<A_1',A_2'|\)\\
&= \sum_{A_1,A_2} p\(A_1,A_2\) \rho_R\(A_1,A_2\)+OD_R,\label{eq:rho_diag}    
\end{aligned}
\end{align}  
where we have separated the diagonal part of \Eqref{eq:rho_diag} (i.e., terms with $A_1=A_1'$ and $A_2=A_2'$) from the off-diagonal part, represented by $OD_R$.
The diagonal pieces $\rho_R\(A_1,A_2\)$ are just the density matrices for fixed-area states, which are given by cutting open the Euclidean fixed-area saddle depicted in \figref{fig:fixed}.

In AdS/CFT, it is well understood that any operator in the entanglement wedge, the region between $R$ and the RT surface $\g_R$, can be measured by subregion $R$ \cite{Almheiri:2014lwa,Dong:2016eik}.
Similarly, operators in the complementary entanglement wedge can be measured by subregion $\bar{R}$.
In this setup, the RT surface is either $\g_1$ or $\g_2$ depending on which one has minimal area.
Irrespective of which surface is the true RT surface, $A_2$ is always measurable by $\bar{R}$.
Thus, we can ignore any off-diagonal terms in \Eqref{eq:rho_diag} where $A_2\neq A_2'$ since they vanish when performing the trace over $\bar{R}$ \cite{Marolf:2020vsi,Akers:2020pmf}. 
We can also ignore terms where $A_1\neq A_1'$ as long as $A_1<A_2$ for the same reason. 
However, we cannot in general ignore off-diagonal terms where $A_1\neq A_1'$ and $A_1>A_2$.

Despite this, Ref.~\cite{Marolf:2020vsi} found reasonable results by assuming a diagonal approximation for the moments of $\rho_R$, i.e.,
\begin{equation}\label{eq:diag1}
    \tr\(\rho_R^n\)\approx \sum_{A_1,A_2} p\(A_1,A_2\)^n \tr\(\rho_R\(A_1,A_2\)^n\).
\end{equation}
While this approximation can be justified for the entanglement entropy \cite{Akers:2020pmf}, there is no similar justification that we can provide for the Renyi entropies in general.
Nonetheless, we will proceed by assuming this diagonal approximation.
At leading order in $G$, we will further assume that $p(A_1,A_2)$ is such that we can apply \Eqref{eq:diag1} in the saddle point approximation, i.e.,
\begin{equation}\label{eq:diag2}
    \tr\(\rho_R^n\)\approx \max_{A_1,A_2} \[ p\(A_1,A_2\)^n \tr\(\rho_R\(A_1,A_2\)^n\)\],
\end{equation}
which generally only leads to $\log G$ corrections to the entropy compared to \Eqref{eq:diag1}.\footnote{Near entanglement phase transitions, it can lead to $O\(\frac{1}{\sqrt{G}}\)$ corrections to the entanglement entropy.} 
Our philosophy will be to derive our modified proposal under this assumption and provide evidence for the assumption in \secref{sec:agree} by demonstrating agreement with known results in AdS/CFT, derived using independent methods.

Note also that the diagonal terms considered in \Eqref{eq:diag1} are precisely the ones with replica-symmetric boundary conditions (viewing the area fixing constraint as a boundary condition).
As we note below, replica-symmetry breaking saddles can still contribute to the diagonal terms and we shall include them in our analysis.
The off-diagonal terms will generically involve only replica-symmetry breaking saddles.
In this sense, our assumption of ignoring off-diagonal terms is weaker than the assumption of Ref.~\cite{Lewkowycz:2013nqa}.
In fact, at leading order in $G$, one can instead directly assume \Eqref{eq:diag2}.
Picking the dominant diagonal term is similar to the assumption of Ref.~\cite{Lewkowycz:2013nqa} which assumed that at integer $n$, the dominant saddle in the Renyi entropy computation is replica-symmetric.
The advantage of directly assuming \Eqref{eq:diag2} would be that perhaps one can prove it in other ways, e.g., by using properties of the gravitational path integral. 
We will comment more on this in \secref{sec:disc}.
For now, we note that the diagonal approximation, as well as the saddle point approximation to it, satisfies the basic consistency check of being symmetric between the two systems, i.e, $S_n\(R\)=S_n\(\bar{R}\)$.

\begin{figure}[t]
\centering
\includegraphics[scale=0.7]{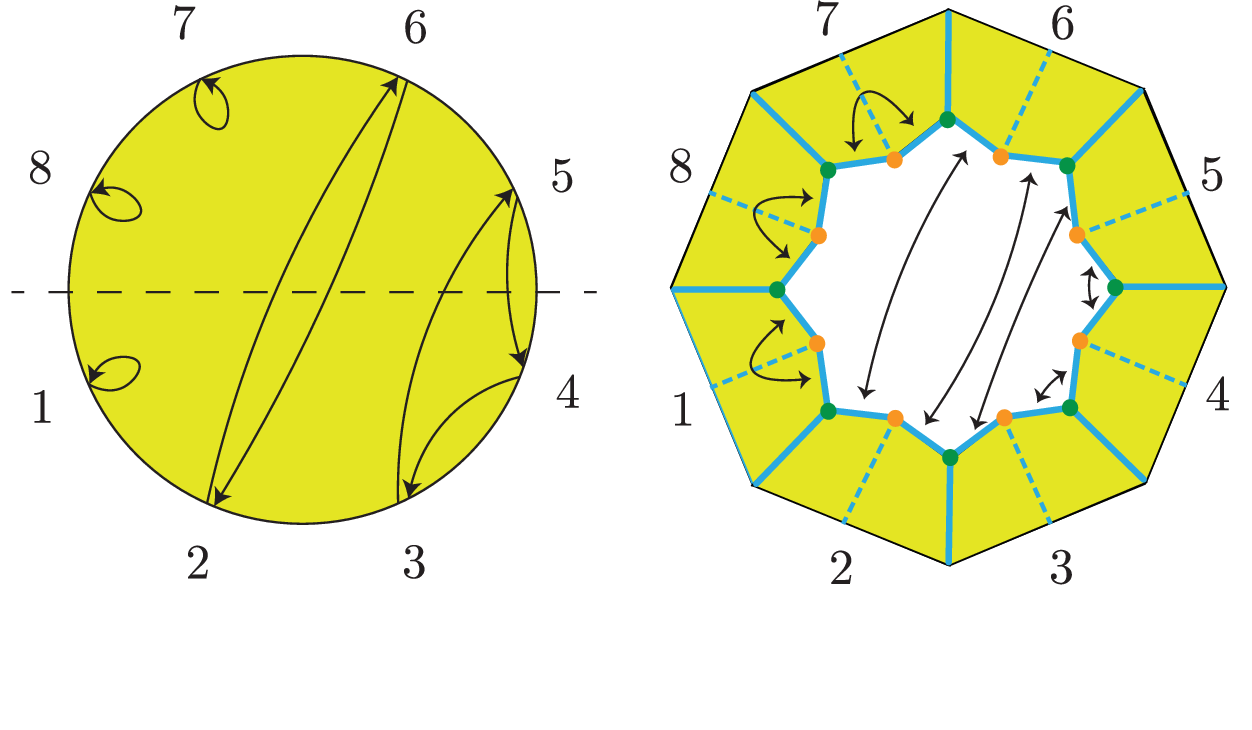}
\caption{(left): A permutation on $n$ elements. (right): The corresponding fixed-area Euclidean saddle that computes $\tr\(\rho_R\(A_1,A_2\)^n\)$ involves $n$ copies of the original saddle (see \figref{fig:fixed}) glued together in a manner dictated by the permutation.\label{fig:gluing}}
\end{figure}

In order to compute \Eqref{eq:diag2}, we need to understand how to compute $\tr\(\rho_R^n\)$ for a fixed-area state.
This was understood in detail in Refs.~\cite{Dong:2018seb,Akers:2018fow,Dong:2019piw} and we briefly review it here.
For integer $n$, $\tr\(\rho_R\(A_1,A_2\)^n\)$ is a partition function with $n$ copies of the original system that are required to satisfy a fixed-area boundary condition at the extremal surfaces.
The saddles that compute $\tr\(\rho_R\(A_1,A_2\)^n\)$ are given by cutting open the original fixed-area geometry, taking $n$ copies of it, and gluing them together in a manner corresponding to an arbitrary permutation of $n$ objects (see \figref{fig:gluing}).
For $A_1<A_2$, the dominant such saddle $g_n$ corresponds to the identity permutation, while for $A_1>A_2$, $g_n$ corresponds to the cyclic permutation.
Momentarily ignoring bulk quantum corrections, the classical action for the dominant saddle is given by \cite{Dong:2018seb}
\begin{equation}\label{eq:act}
    I\[g_n\]=n I\[g_1\]+\frac{n-1}{4G}\min\[A_1,A_2\].
\end{equation}
At $A_1=A_2$, a class of replica symmetry breaking saddles becomes degenerate, but since the degeneracy only contributes at subleading order, for our purpose we can continue using \Eqref{eq:act} even at $A_1=A_2$.
Accounting for normalization, at leading order, one obtains
\begin{align}
\begin{aligned}
    \tr\(\rho_R\(A_1,A_2\)^n\)&= \exp\(-I\[g_n\]+nI\[g_1\]\)\\
    &= \exp\(-\frac{n-1}{4G}\min\[A_1,A_2\]\)\label{eq:rhoRfixed}.
\end{aligned}
\end{align}

While we derived \Eqref{eq:rhoRfixed} for integer $n$, the result can be continued to arbitrary $n$ in the obvious way.
At (positive) integer $n$ (as well as real $n > 1$), the minimization over areas arises automatically from minimizing the action, although this action minimization naively appears to turn into an area maximization for $n<1$ as discussed in \secref{sub:cosmic}.
Nevertheless, fixed-area states are good analogues of states in random tensor networks, and in that context, \Eqref{eq:rhoRfixed} is known to be correct even for $n<1$.\footnote{An easy way to see this is to notice that the minimal cut in a tensor network puts a bound on the rank of the density matrix, which is the exponential of the $n=0$ Renyi entropy. Using monotonicity of Renyi entropy as a function of $n$ then constrains the Renyi entropies for $n<1$ to be given by the minimal cut. More generally, the precise spectrum can be derived even in the large bond dimension limit using the method of moments.}
This explicit minimization over areas is crucial for our proposal.

We can now combine \Eqref{eq:rhoRfixed} with \Eqref{eq:diag2} to compute the Renyi entropy at leading order, leading to our main result.
\begin{defi}
The modified cosmic brane proposal for Renyi entropy is given by
\begin{align}\label{eq:renyi}
    S_n^{MC}(R) = \begin{cases}
        \displaystyle \frac{1}{1-n}\max_{A_1,A_2}\, \max_{i=1,2}\( n\log p\(A_1,A_2\) +\frac{(1-n)}{4G}A_i\)&n\geq1,\\
        \displaystyle \frac{1}{1-n}\max_{A_1,A_2}\, \min_{i=1,2}\( n\log p\(A_1,A_2\) +\frac{(1-n)}{4G}A_i\)&n<1.
    \end{cases}
\end{align}
\end{defi}
Using \Eqref{eq:renyi}, one can also obtain the refined Renyi entropy \Eqref{eq:refined}.
Denoting $\{\overline{i},\bar{A_1}(n),\bar{A_2}(n)\}$ as the location where the optimum in \Eqref{eq:renyi} is achieved at a given $n$, we then have
\begin{equation}\label{eq:refrenyi}
    \tilde{S}_n^{MC}(R) = \frac{\bar{A_{\overline{i}}}(n)}{4G}.
\end{equation}
The simplest way of deriving \Eqref{eq:refrenyi} is to assume $p\(A_1,A_2\)$ to be differentiable so that the terms arising from $n$-derivatives acting on $\bar{A_i}(n)$ cancel out due to stationarity.
More generally, \Eqref{eq:refrenyi} can also be proved for continuous $p\(A_1,A_2\)$ using Danskin's theorem (see e.g. Ref.~\cite{bertsekas1997nonlinear}).

Note that \Eqref{eq:refrenyi} is applicable within a given phase where $\bar{A_{\overline{i}}}(n)$ changes continuously, although it can jump discontinuously at phase boundaries.
The Renyi entropy, on the other hand, is typically continuous but not analytic at such phase boundaries.

We can now discuss the effect of bulk quantum corrections to our modified cosmic brane proposal.
In general, accounting for bulk quantum corrections is complicated since the replica-symmetry breaking contributions need to be carefully summed over in order to continue to non-integer $n$.
We will discuss this further in \secref{sec:disc}.

\begin{figure}[t]
\centering
\includegraphics[scale=0.7]{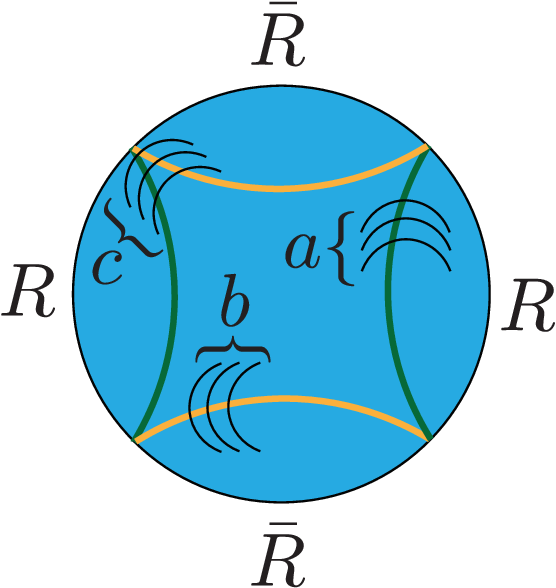}
\caption{A bulk state with only bipartite entanglement between the relevant bulk regions. The black lines connecting various regions represent Bell pairs of three different types labelled $a$, $b$ and $c$. \label{fig:bulk}}
\end{figure}

A simple update to \Eqref{eq:renyi} to include some bulk quantum corrections arises if we can find a code subspace such that the bulk entropy contributions from the entanglement wedges corresponding to both $\gamma_1$ and $\gamma_2$ can be rewritten (at least approximately) as the expectation values of commuting linear operators.
If so, the bulk entropy terms can be simply absorbed into a redefinition of the area operators, and we can derive \Eqref{eq:renyi} using the new area operators (assuming that replica-symmetry breaking contributions continue to be subleading).

A simple case where this can be done is when the bulk state is bipartitely entangled at leading order in $G$ as shown in \figref{fig:bulk}.
In this case, the leading order (in $G$) bulk entropies can be absorbed into a redefinition of the area operators. The new area operators commute and can be used to define a natural generalization of fixed-area states. Each such fixed-area state contains three kinds of Bell pairs: $a$, $b$ and $c$ as shown in \figref{fig:bulk}.
In such a state, the original areas and the redefined areas are related by $\frac{A_1}{4G}\to \frac{A_1}{4G} + S_a + S_c$ and $\frac{A_2}{4G}\to \frac{A_2}{4G} + S_b + S_c$, where $S_a$, $S_b$, $S_c$ are the entropies of one half of the $a$, $b$, $c$ Bell pairs.
Using the new area operators, we then find our modified cosmic brane proposal \Eqref{eq:renyi}.

\section{Comparing the Two Proposals}
\label{sec:comp}

Having formulated both proposals in terms of an optimization over fixed-area geometries, we can now compare them and see when they agree.
For starters, the difference between the proposals \Eqref{eq:cosmic_A} and \Eqref{eq:renyi} is the order of optimizations.
This leads us to the following general comparison:
\begin{nthm}
    For $n\geq1$, $S_n^{MC}(R)=S_n^{C}(R)$. For $n<1$, $S_n^{MC}(R)\leq S_n^{C}(R)$.
\end{nthm}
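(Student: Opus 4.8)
The plan is to recognize that both proposals optimize the \emph{same} objective
$$f(A_1,A_2,i) := n\log p(A_1,A_2) + \frac{1-n}{4G}\,A_i,$$
and that the theorem then reduces to two elementary facts about the order of optimizations, the only delicate point being the sign of the prefactor $\frac{1}{1-n}$.

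For $n\geq1$ I would note that $S_n^{C}(R)$ and $S_n^{MC}(R)$ are both $\frac{1}{1-n}$ times a \emph{double maximization} of $f$ over the joint variable $(A_1,A_2,i)$, differing only in the order ($\max_i\max_{A_1,A_2}$ versus $\max_{A_1,A_2}\max_i$). Since each order simply computes the supremum of $f$ over its entire domain, the two inner quantities are identically equal; multiplying by the common prefactor $\frac{1}{1-n}$, whatever its sign, preserves the equality, giving $S_n^{MC}(R)=S_n^{C}(R)$.

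For $n<1$ I would first observe that $\frac{1}{1-n}>0$, so the claimed inequality $S_n^{MC}(R)\leq S_n^{C}(R)$ is equivalent to
$$\max_{A_1,A_2}\,\min_{i}\, f(A_1,A_2,i)\;\leq\;\min_{i}\,\max_{A_1,A_2}\, f(A_1,A_2,i).$$
This is the standard max--min (weak duality) inequality. To see it, for any point $(A_1,A_2)$ and any index $i$ one has $\min_j f(A_1,A_2,j)\leq f(A_1,A_2,i)\leq\max_{A_1',A_2'}f(A_1',A_2',i)$; the leftmost expression does not depend on $i$, so taking the maximum over $(A_1,A_2)$ and then the minimum over $i$ yields exactly the inequality above. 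Multiplying through by the positive prefactor gives $S_n^{MC}(R)\leq S_n^{C}(R)$.

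Since the argument is short, the main thing to get right is bookkeeping rather than any genuine obstacle: I must confirm that the prefactor $\frac{1}{1-n}$ is positive precisely in the regime $n<1$ where the two proposals differ in optimization order, because it is this positivity that keeps the direction of the max--min inequality intact (a negative prefactor would reverse it, which is exactly why the $n\geq1$ and $n<1$ cases behave so differently). I would also remark that because $p(A_1,A_2)$ is smooth and peaked the optima are attained, so the $\max$ and $\min$ are literal; but the chain of inequalities holds verbatim with $\sup$ and $\inf$, so attainment is inessential to the logic.
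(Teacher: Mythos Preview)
Your proposal is correct and follows essentially the same approach as the paper: the $n\geq1$ case is handled by commutativity of two maximizations, and the $n<1$ case by the max--min (weak duality) inequality, with attention to the sign of $\frac{1}{1-n}$. The paper simply cites the max--min inequality rather than spelling out the one-line proof you give, but the logic is identical.
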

\begin{proof}
For $n\geq1$, we have two maximizations whose order can always be swapped.
Thus, the two proposals agree in this case.
For $n<1$, the original cosmic brane proposal is a minimax prescription whereas the modified cosmic brane proposal is a maximin prescription.
From the well-known max-min inequality (see for instance Ref.~\cite{boyd2004convex}), we obtain the required inequality.
\end{proof}

We will now find necessary and sufficient conditions for the two proposals to agree for $n<1$.
The original cosmic brane proposal has two candidate saddles, each with a cosmic brane sitting at $\gamma_i$.
We will find that the two proposals agree if and only if at least one of these saddles satisfies the constraint that $\gamma_i$ is the minimal surface among $\gamma_{1,2}$ (which we will refer to as the minimality constraint).

To do so, it is useful to establish some notation.
Let $f_i\equiv n \log p\(A_1,A_2\)+(1-n)\frac{A_i}{4G}$ for $i=1,2$.
Let $\vec{A^{(i)}}=\(A_1^{(i)},A_2^{(i)}\)$ be the location\footnote{Location refers to a point in the $\(A_1,A_2\)$ parameter space.} that maximize $f_i$ subject to the minimality constraint $A_i\leq A_{3-i}$.
Further define $\vec{\tilde{A}^{(i)}}=\(\tilde{A}_1^{(i)},\tilde{A}_2^{(i)}\)$ to be the location that maximizes $f_i$ without any constraint.
In cases with multiple maxima, we simply choose $\vec{A^{(i)}}$ to be any of the maximal locations, and choose $\vec{\tilde{A}^{(i)}}$ to be $\vec{A^{(i)}}$ if it satisfies the minimality constraint, and if otherwise, choose it to be any of the allowed maximal locations.
In the above notation, all the locations depend on $n$, which we leave implicit in this section.

With this notation, we can rewrite the two proposals in the $n<1$ case as 
\begin{claim}
The original cosmic brane proposal for $n<1$ is given by
\begin{align}\label{eq:CS}
    S_n^C(R) = \frac{1}{1-n}\min_{i=1,2}f_i\(\vec{\tilde{A}^{(i)}}\).
\end{align}
\end{claim}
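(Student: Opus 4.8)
The plan is to recognize that this claim is a direct unpacking of Definition~\ref{CBP} in the notation $f_i$ and $\vec{\tilde{A}^{(i)}}$ introduced just above, so the proof amounts to careful bookkeeping rather than any new analysis. The starting point is the $n<1$ branch of \Eqref{eq:cosmic_A}, namely $S_n^{C}(R)=\frac{1}{1-n}\min_{i=1,2}\max_{A_1,A_2}\left(n\log p(A_1,A_2)+\frac{(1-n)}{4G}A_i\right)$, where the prefactor $\frac{1}{1-n}$ is positive for $n<1$.

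First I would observe that the objective appearing inside the double optimization is, by definition, exactly $f_i(A_1,A_2)=n\log p(A_1,A_2)+(1-n)\frac{A_i}{4G}$. The inner optimization $\max_{A_1,A_2}$ is unconstrained, and $\vec{\tilde{A}^{(i)}}$ was defined precisely as a location maximizing $f_i$ with no constraint; hence $\max_{A_1,A_2}f_i(A_1,A_2)=f_i(\vec{\tilde{A}^{(i)}})$. Substituting this identity into the display above immediately gives $S_n^{C}(R)=\frac{1}{1-n}\min_{i=1,2}f_i(\vec{\tilde{A}^{(i)}})$, which is \Eqref{eq:CS}.

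The only points requiring care are existence, non-uniqueness, and the sign of the prefactor. For existence I would rely on the standing assumption that $p(A_1,A_2)$ is a smooth, peaked distribution for which the supremum of each $f_i$ is attained at a finite point. For non-uniqueness I would note that the value $f_i(\vec{\tilde{A}^{(i)}})$ equals the maximal value $\max_{A_1,A_2}f_i$ regardless of which maximizer the stated tie-breaking rule selects, so the right-hand side of \Eqref{eq:CS} is well defined and independent of that choice. Finally, since $\frac{1}{1-n}>0$ and both the $\min_{i}$ and the prefactor act only after the inner $\max_{A_1,A_2}$, the inner problem genuinely remains a maximization and the equality is exact. I expect no real obstacle in this claim itself; the substantive work of this section is the companion comparison with $S_n^{MC}(R)$, for which the constrained maximizers $\vec{A^{(i)}}$ and the minimality constraint become essential, and this claim merely records the unconstrained baseline used in that comparison.
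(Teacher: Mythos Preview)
Your proposal is correct and matches the paper's treatment: the paper presents this claim without a separate proof, simply introducing it with ``we can rewrite the two proposals in the $n<1$ case as,'' so it is indeed just a restatement of the $n<1$ branch of Definition~\ref{CBP} in the notation $f_i$, $\vec{\tilde{A}^{(i)}}$. Your bookkeeping of the substitution, existence, tie-breaking, and sign is accurate and appropriately cautious.
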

\begin{claim}
The modified cosmic brane proposal for $n<1$ is given by
\begin{align}\label{eq:MCS}
    S_n^{MC}(R) = \frac{1}{1-n} \max_{i=1,2} f_i\(\vec{A^{(i)}}\).
\end{align}
\end{claim}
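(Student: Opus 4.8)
The plan is to prove the final Claim \eqref{eq:MCS} by directly resolving the inner minimization in the $n<1$ branch of the modified proposal \eqref{eq:renyi}. Since $n<1$, the prefactor $\frac{1}{1-n}$ is positive, so it suffices to show that the bare quantity $\max_{A_1,A_2}\min_{i=1,2} f_i(A_1,A_2)$ equals $\max_{i=1,2} f_i(\vec{A^{(i)}})$. The starting observation is that $f_1$ and $f_2$ share the common term $n\log p(A_1,A_2)$ and differ only through $(1-n)\frac{A_i}{4G}$; because $1-n>0$, this term is monotonically increasing in $A_i$, so the minimization over $i$ simply selects the smaller of the two areas:
\begin{equation}
\min_{i=1,2} f_i(A_1,A_2) = n\log p(A_1,A_2) + \frac{(1-n)}{4G}\min[A_1,A_2].
\end{equation}
(This is also why the compact form \eqref{eq:resultshort} and the expanded form \eqref{eq:renyi} are the same statement, which is worth noting in passing.)

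Next I would split the $(A_1,A_2)$ parameter space along the diagonal $A_1=A_2$. On the half-space $\{A_1\leq A_2\}$ one has $\min[A_1,A_2]=A_1$, so the objective above coincides with $f_1$, while on $\{A_2\leq A_1\}$ it coincides with $f_2$. The unconstrained maximum therefore decomposes into two constrained maxima,
\begin{equation}
\max_{A_1,A_2}\min_{i=1,2} f_i = \max\left\{\,\max_{A_1\leq A_2} f_1(A_1,A_2),\; \max_{A_2\leq A_1} f_2(A_1,A_2)\,\right\}.
\end{equation}
By the very definition of $\vec{A^{(i)}}$ as the maximizer of $f_i$ subject to the minimality constraint $A_i\leq A_{3-i}$, the first inner maximum equals $f_1(\vec{A^{(1)}})$ and the second equals $f_2(\vec{A^{(2)}})$. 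Taking the outer maximum then yields $\max_{i=1,2} f_i(\vec{A^{(i)}})$, which is exactly \eqref{eq:MCS}.

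The argument is essentially immediate once the inner minimization is resolved, so I do not anticipate a substantive obstacle; the only point demanding care is the diagonal $A_1=A_2$, where $f_1=f_2$ and the two constrained regions overlap. Because the minimality constraint is taken to be the \emph{closed} condition $A_i\leq A_{3-i}$, the diagonal lies in both regions and the two constrained optimizations agree there, so nothing is double-counted or omitted at the boundary; the freedom to pick any of several maximizers (as allowed in the definition of $\vec{A^{(i)}}$) is harmless since only the value $f_i(\vec{A^{(i)}})$ enters. The same decomposition strategy adapts to the companion Claim \eqref{eq:CS}, the only change being that the original proposal's inner \emph{unconstrained} maxima define $\vec{\tilde{A}^{(i)}}$ and the outer operation is a minimization over $i$, producing $\min_{i} f_i(\vec{\tilde{A}^{(i)}})$.
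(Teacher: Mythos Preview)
Your argument is correct. The paper itself does not give an explicit proof of this Claim, treating \eqref{eq:MCS} as an immediate rewriting of the $n<1$ branch of \eqref{eq:renyi} in the notation $f_i$, $\vec{A^{(i)}}$; your splitting of the $(A_1,A_2)$ plane along the diagonal and identification of $\min_i f_i$ with $f_1$ (resp.\ $f_2$) on each closed half-space is exactly the unpacking that makes this rewriting transparent, and your handling of the diagonal overlap and of non-unique maximizers is appropriate.
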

We will now determine necessary and sufficient conditions under which these proposals agree.

\begin{nlemma}\label{lem4}
For $n<1$, if $\vec{A^{(1)}}=\vec{\tilde{A}^{(1)}}$, then $S_n^C(R)=\frac{1}{1-n}f_1\(\vec{\tilde{A}^{(1)}}\)$.
An analogous statement holds with the roles of $\vec{A^{(1)}}$ and $\vec{A^{(2)}}$ reversed.
\end{nlemma}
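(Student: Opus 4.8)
The plan is to reduce the claim $S_n^C(R)=\frac{1}{1-n}f_1\(\vec{\tilde{A}^{(1)}}\)$ to establishing a single scalar inequality. By Claim~\eqref{eq:CS}, the original proposal is $S_n^C(R)=\frac{1}{1-n}\min_{i=1,2}f_i\(\vec{\tilde{A}^{(i)}}\)$, and since $\frac{1}{1-n}>0$ for $n<1$, it suffices to show that the inner minimization selects $i=1$, i.e.\ that $f_1\(\vec{\tilde{A}^{(1)}}\)\leq f_2\(\vec{\tilde{A}^{(2)}}\)$. So the entire content of the lemma will be this one comparison.

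First I would unpack the hypothesis. By definition $\vec{A^{(1)}}$ maximizes $f_1$ subject to the minimality constraint $A_1\leq A_2$, whereas $\vec{\tilde{A}^{(1)}}$ is the unconstrained maximizer of $f_1$. The assumption $\vec{A^{(1)}}=\vec{\tilde{A}^{(1)}}$ therefore says precisely that the unconstrained maximizer already obeys $\tilde{A}_1^{(1)}\leq\tilde{A}_2^{(1)}$. The key observation is then the elementary identity $f_2-f_1=\frac{1-n}{4G}\(A_2-A_1\)$, which follows immediately from the definition of $f_i$. Evaluating this at $\vec{\tilde{A}^{(1)}}$ and using $1-n>0$ together with $\tilde{A}_1^{(1)}\leq\tilde{A}_2^{(1)}$ yields $f_2\(\vec{\tilde{A}^{(1)}}\)\geq f_1\(\vec{\tilde{A}^{(1)}}\)$. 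Finally, because $\vec{\tilde{A}^{(2)}}$ is the unconstrained maximizer of $f_2$, I would chain $f_2\(\vec{\tilde{A}^{(2)}}\)\geq f_2\(\vec{\tilde{A}^{(1)}}\)\geq f_1\(\vec{\tilde{A}^{(1)}}\)$, which is exactly the inequality needed. Hence the minimum over $i$ is attained at $i=1$ and the stated formula follows. The reversed statement is obtained by interchanging the indices $1$ and $2$ throughout, using $f_1-f_2=\frac{1-n}{4G}\(A_1-A_2\)$ instead.

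I do not expect a genuine obstacle: the argument is a short chain of inequalities rather than an optimization computation. The only points requiring care are bookkeeping items fixed by the conventions stated just before the lemma, namely that $1-n>0$ in the regime $n<1$ (so that the sign of the $A_i$-term in $f_i$ is positive) and that the minimality constraint for $i=1$ reads $A_1\leq A_2$ (so that the hypothesis translates into the correct ordering of the two area components). Getting these two signs right is what makes $f_2$ dominate $f_1$ at the common maximizer, and everything else is immediate.
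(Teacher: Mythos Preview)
Your proposal is correct and follows essentially the same approach as the paper: both reduce to showing $f_1\(\vec{\tilde{A}^{(1)}}\)\leq f_2\(\vec{\tilde{A}^{(2)}}\)$ via the chain $f_1\(\vec{\tilde{A}^{(1)}}\)\leq f_2\(\vec{\tilde{A}^{(1)}}\)\leq f_2\(\vec{\tilde{A}^{(2)}}\)$, using the constraint $\tilde A_1^{(1)}\leq\tilde A_2^{(1)}$ with $1-n>0$ for the first step and unconstrained maximality of $\vec{\tilde{A}^{(2)}}$ for the second. The only cosmetic difference is that you package the first inequality via the identity $f_2-f_1=\frac{1-n}{4G}(A_2-A_1)$, whereas the paper writes out the definitions of $f_1,f_2$ explicitly.
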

\begin{proof}
    \begin{align}\label{eq:proof1}
        f_1\(\vec{\tilde{A}^{(1)}}\) = f_1\(\vec{A^{(1)}}\) &= n \log p\(A_1^{(1)},A_2^{(1)}\)+(1-n)\frac{A_1^{(1)}}{4G}\\\label{eq:proof12}
        &\leq n \log p\(A_1^{(1)},A_2^{(1)}\)+(1-n)\frac{A_2^{(1)}}{4G}=f_2\(\vec{A^{(1)}}\)\\
        &\leq f_2\(\vec{\tilde{A}^{(2)}}\),
    \end{align}
where the second line uses the fact that $\vec{A^{(1)}}$ by definition lies within the constrained domain $A_1\leq A_2$, together with the fact that $n<1$.
The third line uses the fact that $\vec{\tilde{A}^{(2)}}$ is the unconstrained maximum of $f_2$.
Therefore, $S_n^C\(R\) = \frac{1}{1-n}f_1\(\vec{\tilde{A}^{(1)}}\)$ due to the minimization in \Eqref{eq:CS}.
\end{proof}

\begin{nlemma}\label{lem5}
For $n<1$, if $\vec{A^{(1)}}=\vec{\tilde{A}^{(1)}}$, then $S_n^{MC}(R)=\frac{1}{1-n}f_1\(\vec{A^{(1)}}\)$.
An analogous statement holds with the roles of $\vec{A^{(1)}}$ and $\vec{A^{(2)}}$ reversed.
\end{nlemma}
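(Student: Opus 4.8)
The plan is to reduce the statement to a single inequality between the two candidate optima. By the Claim in \eqref{eq:MCS}, the modified proposal for $n<1$ is $S_n^{MC}(R)=\frac{1}{1-n}\max_{i=1,2}f_i\(\vec{A^{(i)}}\)$, and since $\frac{1}{1-n}>0$ for $n<1$, it suffices to show that under the hypothesis the outer maximization over $i$ is attained at $i=1$, i.e.\ that $f_1\(\vec{A^{(1)}}\)\geq f_2\(\vec{A^{(2)}}\)$. This parallels the proof of Lemma~\ref{lem4}, except that here the relevant reference point is the \emph{unconstrained} maximizer $\vec{\tilde{A}^{(1)}}$, which by assumption coincides with the constrained one $\vec{A^{(1)}}$.

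The key step is a short chain of inequalities starting from $f_2\(\vec{A^{(2)}}\)$. First I would use that $\vec{A^{(2)}}$ lies in the domain $A_2\leq A_1$ by definition, so writing $\vec{A^{(2)}}=\(A_1^{(2)},A_2^{(2)}\)$ we have $A_2^{(2)}\leq A_1^{(2)}$; combined with $n<1$ (hence $1-n>0$) this gives
\begin{align*}
    f_2\(\vec{A^{(2)}}\) &= n\log p\(A_1^{(2)},A_2^{(2)}\)+(1-n)\frac{A_2^{(2)}}{4G}\\
    &\leq n\log p\(A_1^{(2)},A_2^{(2)}\)+(1-n)\frac{A_1^{(2)}}{4G} = f_1\(\vec{A^{(2)}}\).
\end{align*}
Next I would invoke the hypothesis $\vec{A^{(1)}}=\vec{\tilde{A}^{(1)}}$: since $\vec{\tilde{A}^{(1)}}$ is by definition the unconstrained maximizer of $f_1$, the value $f_1\(\vec{A^{(2)}}\)$ at the particular point $\vec{A^{(2)}}$ cannot exceed $f_1\(\vec{\tilde{A}^{(1)}}\)=f_1\(\vec{A^{(1)}}\)$. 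Chaining these yields $f_2\(\vec{A^{(2)}}\)\leq f_1\(\vec{A^{(2)}}\)\leq f_1\(\vec{A^{(1)}}\)$, so $\max_{i=1,2}f_i\(\vec{A^{(i)}}\)=f_1\(\vec{A^{(1)}}\)$ and the desired formula follows.

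The analogous statement with the roles of $\vec{A^{(1)}}$ and $\vec{A^{(2)}}$ reversed follows by the same argument with the indices swapped. The only real subtlety — and the step I would be most careful about — is the use of the hypothesis to upgrade the constrained maximum of $f_1$ to an unconstrained one: it is precisely this upgrade that lets me evaluate $f_1$ at the foreign point $\vec{A^{(2)}}$ and still bound it by $f_1\(\vec{A^{(1)}}\)$. Without $\vec{A^{(1)}}=\vec{\tilde{A}^{(1)}}$ this last inequality could fail, since $\vec{A^{(2)}}$ need not satisfy the constraint $A_1\leq A_2$ defining $\vec{A^{(1)}}$.
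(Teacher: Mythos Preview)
Your proof is correct and follows essentially the same approach as the paper: both establish the chain $f_2\(\vec{A^{(2)}}\)\leq f_1\(\vec{A^{(2)}}\)\leq f_1\(\vec{\tilde{A}^{(1)}}\)=f_1\(\vec{A^{(1)}}\)$, using the minimality constraint at $\vec{A^{(2)}}$ for the first inequality and the hypothesis (unconstrained maximality of $f_1$ at $\vec{A^{(1)}}$) for the second. The only cosmetic difference is that the paper writes the chain starting from $f_1\(\vec{A^{(1)}}\)$ and descending, whereas you start from $f_2\(\vec{A^{(2)}}\)$ and ascend.
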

\begin{proof}
\begin{align}
        f_1\(\vec{A^{(1)}}\) &= f_1\(\vec{\tilde{A}^{(1)}}\)\\
        &\geq f_1\(\vec{A^{(2)}}\)
        = n \log p\(A_1^{(2)},A_2^{(2)}\)+(1-n)\frac{A_1^{(2)}}{4G}\\
        &\geq n \log p\(A_1^{(2)},A_2^{(2)}\)+(1-n)\frac{A_2^{(2)}}{4G}=f_2\(\vec{A^{(2)}}\),
    \end{align}
where the second line uses the fact that $\vec{\tilde{A}^{(1)}}$ is the unconstrained maximum of $f_1$.
The third line uses the fact that $\vec{A^{(2)}}$ is constrained to lie within the domain $A_2\leq A_1$, in addition to $n<1$.
Thus, we see that $S_n^{MC}(R)=\frac{1}{1-n}f_1\(\vec{A^{(1)}}\)$ since the maximization in the modified cosmic brane proposal \Eqref{eq:MCS} is achieved at $\vec{A^{(1)}}$.
\end{proof}

\begin{nlemma}\label{lem6}
If $S_n^{MC}(R)=S_n^{C}(R)$, then $\vec{A^{(1)}}=\vec{\tilde{A}^{(1)}}$ or $\vec{A^{(2)}}=\vec{\tilde{A}^{(2)}}$.
\end{nlemma}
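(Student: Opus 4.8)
The plan is to recast the claim in the notation $M_i\equiv f_i(\vec{A^{(i)}})$ and $\tilde{M}_i\equiv f_i(\vec{\tilde{A}^{(i)}})$, so that by Claims \eqref{eq:MCS} and \eqref{eq:CS} the hypothesis $S_n^{MC}(R)=S_n^C(R)$ reads $\max(M_1,M_2)=\min(\tilde{M}_1,\tilde{M}_2)$, and I will show this forces $M_1=\tilde{M}_1$ or $M_2=\tilde{M}_2$. This suffices because the tie-breaking convention in the definitions is exactly the one already used in Lemmas \ref{lem4} and \ref{lem5}: the constrained optimum attains the unconstrained value iff some unconstrained maximizer is feasible, i.e. $M_i=\tilde{M}_i$ iff $\vec{A^{(i)}}=\vec{\tilde{A}^{(i)}}$. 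Throughout I use $1-n>0$, and the elementary fact that on the feasible region $\{A_1\le A_2\}$ one has $f_2-f_1=(1-n)\tfrac{A_2-A_1}{4G}\ge 0$, with equality precisely on the diagonal $A_1=A_2$.

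First I would record the basic inequalities. Since a constrained maximum never exceeds the unconstrained one, $M_i\le\tilde{M}_i$. Using $f_1\le f_2$ on $\{A_1\le A_2\}$ gives $M_1=\max_{A_1\le A_2}f_1\le\max_{A_1\le A_2}f_2\le\tilde{M}_2$, and symmetrically $M_2\le\tilde{M}_1$; together these reprove $\max(M_1,M_2)\le\min(\tilde{M}_1,\tilde{M}_2)$, consistent with the Theorem. Now assume equality and, without loss of generality, $M_1\ge M_2$, so $M_1=\min(\tilde{M}_1,\tilde{M}_2)$. If $\tilde{M}_1\le\tilde{M}_2$ then $M_1=\tilde{M}_1$ and we are done through $i=1$; the content lies in the remaining case $\tilde{M}_2<\tilde{M}_1$, where $M_1=\tilde{M}_2$.

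The key step is to show that in this case the constrained maximizer $\vec{A^{(1)}}$ sits on the diagonal $A_1=A_2$. The chain $M_1\le\max_{A_1\le A_2}f_2\le\tilde{M}_2=M_1$ must saturate, so $\max_{A_1\le A_2}f_1=\max_{A_1\le A_2}f_2=M_1$. Evaluating at $\vec{A^{(1)}}$ then gives $f_2(\vec{A^{(1)}})\ge f_1(\vec{A^{(1)}})=M_1\ge f_2(\vec{A^{(1)}})$, forcing $f_1(\vec{A^{(1)}})=f_2(\vec{A^{(1)}})$ and hence $A_1^{(1)}=A_2^{(1)}$. Being on the diagonal, $\vec{A^{(1)}}$ is feasible for the $f_2$-problem (it satisfies $A_2\le A_1$), so $M_2\ge f_2(\vec{A^{(1)}})=M_1$; combined with $M_2\le M_1$ this yields $M_2=M_1=\tilde{M}_2$, i.e. $\vec{A^{(2)}}=\vec{\tilde{A}^{(2)}}$, completing the disjunction.

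I expect the diagonal-localization to be the only real obstacle. It is tempting to argue geometrically that when the unconstrained maximizer of $f_1$ is infeasible the constrained maximizer must land on the boundary $A_1=A_2$, but that would require convexity/concavity hypotheses on $\log p$ that are not assumed here, since a local interior maximum could otherwise dominate. The approach above avoids this by extracting the boundary localization from the saturation of the inequality chain rather than from any regularity of $p$, relying only on the sign of $f_2-f_1$ for $n<1$ and on feasibility. The single point to handle with care is the precise reading of the tie-breaking rule, which must be taken so that $M_i=\tilde{M}_i$ is equivalent to $\vec{A^{(i)}}=\vec{\tilde{A}^{(i)}}$, exactly as in Lemmas \ref{lem4} and \ref{lem5}.
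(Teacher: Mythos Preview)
Your proof is correct and follows essentially the same route as the paper: reduce to the nontrivial case where the maximizer index for $S_n^{MC}$ differs from the minimizer index for $S_n^C$, use the saturated inequality chain to force the constrained maximizer $\vec{A^{(i_{MC})}}$ onto the diagonal $A_1=A_2$, and then use feasibility of that diagonal point for the other constrained problem to conclude $M_{i_C}=\tilde{M}_{i_C}$. The only cosmetic difference is that you fix the WLOG on the $\max$ side first and then split on the $\min$ side, whereas the paper splits directly on whether $i_C=i_{MC}$; the diagonal-localization argument is the same.
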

\begin{proof}
By definition, $f_i\(\vec{A^{(i)}}\)=f_i\(\vec{\tilde{A}^{(i)}}\)$ implies that $\vec{A^{(i)}}=\vec{\tilde{A}^{(i)}}$. Thus, we can instead prove that either $f_1\(\vec{A^{(1)}}\)=f_1\(\vec{\tilde{A}^{(1)}}\)$ or $f_2\(\vec{A^{(2)}}\)=f_2\(\vec{\tilde{A}^{(2)}}\)$. $S_n^{MC}(R)=S_n^{C}(R)$ implies that $f_{i_C}\(\vec{\tilde{A}^{(i_C)}}\)=f_{i_{MC}}\(\vec{A^{(i_{MC})}}\)$, where $i_C$ ($i_{MC}$) is the index that achieves the minimum (maximum) in \Eqref{eq:CS} (\Eqref{eq:MCS}). 

If $i_C=i_{MC}$, then we are done. If not, consider $i_C=1$ and $i_{MC}=2$ without loss of generality. Then we have $f_1\(\vec{\tilde{A}^{(1)}}\)=f_2\(\vec{A^{(2)}}\)$. However, we also have
\begin{equation}\label{eq:pf2}
    f_2\(\vec{A^{(2)}}\)\leq f_1\(\vec{A^{(2)}}\)\leq f_1\(\vec{\tilde{A}^{(1)}}\),
\end{equation}
where the first inequality is analogous to the one shown in \eqref{eq:proof12} and the second inequality follows from the definition of $\vec{\tilde{A}^{(1)}}$. Thus, equality must hold throughout and in particular, $f_2\(\vec{A^{(2)}}\)= f_1\(\vec{A^{(2)}}\)$, which in turn implies that $A_1^{(2)}=A_2^{(2)}$. Thus, $\vec{A^{(2)}}$ lies within the constrained region defining $\vec{A^{(1)}}$. Hence, we have $f_1\(\vec{A^{(2)}}\)\leq f_1\(\vec{A^{(1)}}\)$. However, since $i_{MC}=2$, we also have $f_2\(\vec{A^{(2)}}\)\geq f_1\(\vec{A^{(1)}}\)$. Combining this with the equality $f_2\(\vec{A^{(2)}}\)= f_1\(\vec{A^{(2)}}\)$ obtained above, we have $f_1\(\vec{A^{(1)}}\)=f_1\(\vec{A^{(2)}}\)$. This implies that $f_1\(\vec{A^{(1)}}\)=f_1\(\vec{\tilde{A}^{(1)}}\)$ as needed. 
\end{proof}

Combining Lemma~\ref{lem4}, Lemma~\ref{lem5} and Lemma~\ref{lem6}, we immediately find the following necessary and sufficient condition for the two proposals to agree:
\begin{nthm}
    $S_n^{MC}(R)=S_n^{C}(R)$ iff $\vec{A^{(1)}}=\vec{\tilde{A}^{(1)}}$ or $\vec{A^{(2)}}=\vec{\tilde{A}^{(2)}}$, i.e., if either of the saddles $\vec{\tilde{A}^{(i)}}$ satisfies the minimality constraint $A_i\leq A_{3-i}$.
\end{nthm}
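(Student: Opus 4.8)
The plan is to obtain the theorem as an immediate corollary of the three preceding lemmas, proving each direction of the biconditional separately. First I would record that, under the conventions fixed above, the condition ``$\vec{\tilde{A}^{(i)}}$ satisfies the minimality constraint $A_i\leq A_{3-i}$'' is equivalent to the condition ``$\vec{A^{(i)}}=\vec{\tilde{A}^{(i)}}$'' appearing in Lemmas~\ref{lem4} and~\ref{lem5}. Indeed, if the unconstrained maximizer $\vec{\tilde{A}^{(i)}}$ happens to lie in the constrained region $A_i\leq A_{3-i}$, then it is automatically a constrained maximizer as well, so $f_i(\vec{\tilde{A}^{(i)}})=f_i(\vec{A^{(i)}})$; the tie-breaking convention then forces $\vec{A^{(i)}}=\vec{\tilde{A}^{(i)}}$. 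Hence the two phrasings of the hypothesis in the theorem coincide, and it suffices to argue with the equality of locations.

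For the reverse (``if'') direction, suppose without loss of generality that $\vec{A^{(1)}}=\vec{\tilde{A}^{(1)}}$. Then Lemma~\ref{lem4} gives $S_n^C(R)=\frac{1}{1-n}f_1(\vec{\tilde{A}^{(1)}})$ and Lemma~\ref{lem5} gives $S_n^{MC}(R)=\frac{1}{1-n}f_1(\vec{A^{(1)}})$. Since the hypothesis identifies these two locations, the two right-hand sides are equal, so $S_n^{MC}(R)=S_n^C(R)$. The case $\vec{A^{(2)}}=\vec{\tilde{A}^{(2)}}$ follows verbatim from the symmetric statements in the same two lemmas.

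For the forward (``only if'') direction I would simply invoke Lemma~\ref{lem6}, which asserts precisely that $S_n^{MC}(R)=S_n^{C}(R)$ implies $\vec{A^{(1)}}=\vec{\tilde{A}^{(1)}}$ or $\vec{A^{(2)}}=\vec{\tilde{A}^{(2)}}$. Combining the two directions yields the biconditional, completing the proof.

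The substantive content of the argument is thus entirely concentrated in the lemmas, and the genuinely delicate step is the forward direction, i.e.\ Lemma~\ref{lem6}: one must rule out the mixed case in which the minimizing index $i_C$ in the original proposal differs from the maximizing index $i_{MC}$ in the modified proposal. That case is closed by a chain of inequalities forcing $A_1^{(2)}=A_2^{(2)}$, which collapses the two candidate values onto one another. Consequently the main obstacle I anticipate lies not in assembling the theorem from the lemmas but in that case analysis, which hinges on using $n<1$ to fix the sign of the $(1-n)A_i/4G$ terms and on the interplay between the constrained and unconstrained optimizations.
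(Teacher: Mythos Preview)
Your proposal is correct and matches the paper's approach exactly: the paper states the theorem as an immediate consequence of Lemmas~\ref{lem4}, \ref{lem5}, and \ref{lem6}, with Lemmas~\ref{lem4}--\ref{lem5} supplying the ``if'' direction and Lemma~\ref{lem6} the ``only if'' direction. Your additional remark explaining why ``$\vec{\tilde{A}^{(i)}}$ satisfies the minimality constraint'' is equivalent to ``$\vec{A^{(i)}}=\vec{\tilde{A}^{(i)}}$'' via the tie-breaking convention is a helpful clarification that the paper leaves implicit.
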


Thus, we see that the two proposals agree as long as at least one of the candidate saddles in the original cosmic brane proposal satisfies the minimality constraint.
This implies that the two proposals can only disagree when neither of these saddles satisfies the minimality constraint.
In such a situation, the modified cosmic brane proposal will pick either a subleading saddle (for the original cosmic brane proposal) that satisfies the minimality constraint, or a diagonal saddle $\vec{A^{(D)}}$ that satisfies $A_1=A_2$.\footnote{
Related boundary value dominance has shown up in similar analyses of logarithmic negativity \cite{2022PhRvL.129f1602V, 2021arXiv211200020V}.}
We will explore various controlled examples in \secref{sec:gaussian} and \secref{sec:agree} where $\vec{A^{(D)}}$ does indeed dominate for $n<1$, resulting in leading order corrections to the original cosmic brane proposal.
We expect this to be a generic feature for $n<1$.

Before moving on, it is illuminating to discuss the geometry of the saddle $\vec{A^{(D)}}$ to contrast with the original cosmic brane proposal in situations where the state is defined by a smooth gravitational path integral.
On the diagonal, we can write $\min\[A_1,A_2\] = x A_1 + (1-x) A_2$ for any $x$.
Then, the maximizing conditions for the modified cosmic brane proposal are
\begin{align}\label{eq:max3}
     \frac{\pa I\[g_{A_1,A_2}\]}{\pa A_1}&= \frac{x(1-n)}{4 n G},\\
    \frac{\pa I\[g_{A_1,A_2}\]}{\pa A_2}&= \frac{(1-x)(1-n)}{4 n G}\label{eq:max4},
\end{align}
where we look for solutions that satisfy the constraint $A_1=A_2$, thus providing enough conditions to solve for $x$.
The conditions \Eqref{eq:max3} and \Eqref{eq:max4} can be interpreted as introducing cosmic branes with distributed tensions $T_{n,1}=x T_n$ and $T_{n,2}=(1-x)T_n$ at the surfaces $\g_1$ and $\g_2$ respectively.
This is in contrast with the original cosmic brane proposal where the cosmic brane of tension $T_n$ lies at a single surface.

\section{An Illustrative Example: Gaussian Distribution}
\label{sec:gaussian}
To illustrate how our modified cosmic brane proposal works, we consider the example of a Gaussian probability distribution\footnote{While it is physically reasonable to restrict the area distribution to be supported only on the domain $A_1, A_2 \geq0$, we will work in regimes far from these boundaries.
For instance, we will see this to be the case as long as $A_{1,0},A_{2,0}\gg \frac{\sg^2}{G}$.
Thus, the extension of the probability distribution to negative areas will not affect our discussion.
For concreteness, one could also truncate the distribution to the positive area domain, which does not affect any of our calculations at leading order. 
} over the areas such that
\begin{equation}\label{eq:gengauss}
    p\(A_1,A_2\) = \exp\[-\frac{1}{2}\(\vec{A}-\vec{A}_0\)\cdot C^{-1}\cdot \(\vec{A}-\vec{A}_0\)\],
\end{equation}
where $\vec{A}=\(A_1,A_2\)$ as before, $\vec{A}_0=\(A_{1,0},A_{2,0}\)$ represents the most likely area vector, and $C$ is the covariance matrix given by
\begin{equation}\label{eq:Cgen}
    C=\begin{pmatrix} 
  \sg_1^2     & r \sg_1 \sg_2\\ 
  r \sg_1 \sg_2 & \sg_2^2
\end{pmatrix},
\end{equation}
where $r\in\[-1,1\]$.
In this section, we present the simplest case that illustrates our point: $r=0$ and $\sg_1=\sg_2=\sg$.
A complete analysis of the general case \Eqref{eq:Cgen} is relegated to Appendix~\ref{app:gaussian}.

\begin{figure}[t]
\centering
\includegraphics[scale=0.6]{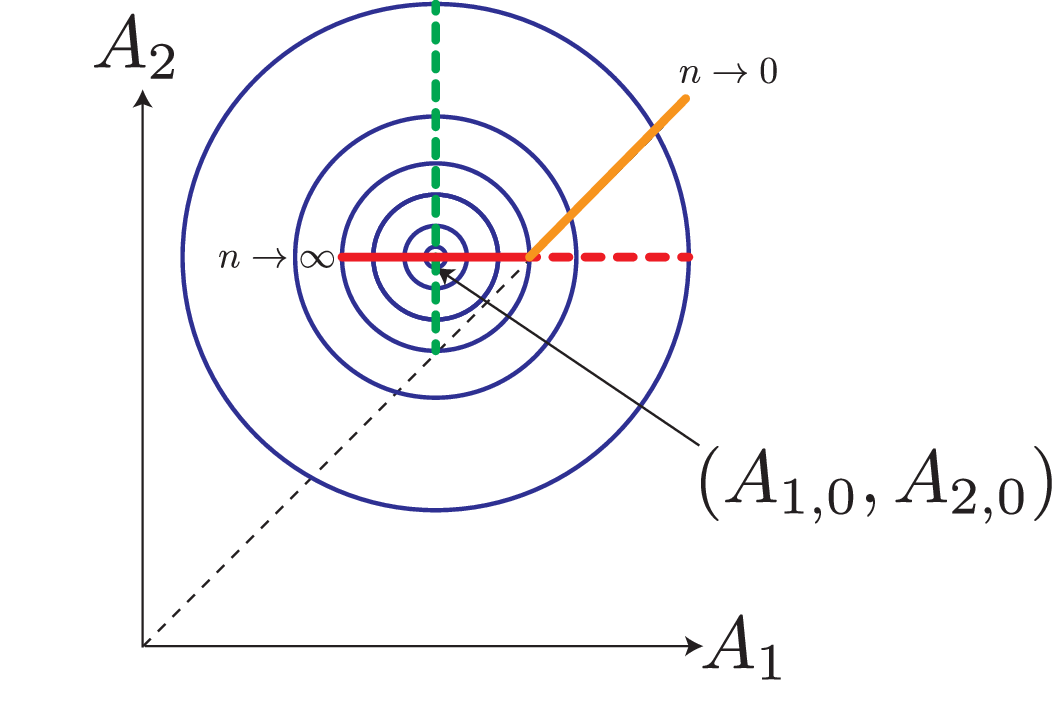}
\caption{The locus of maxima computing the Renyi entropy using the modified cosmic brane proposal is shown as a function of $n$ (solid red and orange lines), with the extremes $n\to \infty$ and $n\to0$ labelled. Level sets of constant probability are circles (blue) with center at $\(A_{1,0},A_{2,0}\)$. The cosmic brane saddle for $i=1$ (red) exits the allowed domain for $n<n_*$ as depicted by the dashed red line. The cosmic brane saddle for $i=2$ (dashed green) is always in the disallowed region in the above setup.}
\label{fig:gaussian1}
\end{figure}

For this setup, we can analytically compute the Renyi entropy using the modified cosmic brane proposal \Eqref{eq:renyi}.
There are three potential maxima that are relevant in the computation: $\vec{\tilde{A}^{(1)}}, \vec{\tilde{A}^{(2)}}$ which represent the unconstrained maxima of $f_1, f_2$ as before, and $\vec{A^{(D)}}$ which represents the maxima on the diagonal $A_1=A_2$.
Hereafter, we make the $n$ dependence in $\vec{\tilde{A}^{(i)}}$ and $\vec{A^{(D)}}$ explicit.

The maximizing conditions for the modified cosmic brane proposal then become a condition on the gradient \begin{equation}\label{eq:max5}
    \nabla \log p = \begin{pmatrix} 
  \frac{n-1}{4 n G} \\ 
  0
\end{pmatrix},
\end{equation}
for $\vec{\tilde{A}^{(1)}}$ and
\begin{equation}\label{eq:max6}
    \nabla \log p = \begin{pmatrix} 
  0 \\ 
  \frac{n-1}{4 n G}
\end{pmatrix},
\end{equation}
for $\vec{\tilde{A}^{(2)}}$.
Solving \Eqref{eq:max5} and \Eqref{eq:max6} we obtain 
\begin{align}\label{eq:A1max}
    \vec{\tilde{A}^{(1)}}(n) &= \vec{A}_0 - \frac{n-1}{4 G n}\begin{pmatrix} 
  \sg^2 \\ 
  0
\end{pmatrix},\\
\vec{\tilde{A}^{(2)}}(n) &= \vec{A}_0 - \frac{n-1}{4 G n}\begin{pmatrix} 
  0 \\ 
  \sg^2 
\end{pmatrix}.\label{eq:A2max}
\end{align}
We will refer to phases where the above peaks dominate as Phase 1 and Phase 2 respectively.
The Renyi entropy and refined Renyi entropy in these phases are given by
\begin{align}
    &S_n^{(1)}(R) = \frac{A_{1,0}}{4G} + \frac{(1-n)\sg^2}{32 n G^2}, &\tilde{S}_n^{(1)}(R) = \frac{A_{1,0}}{4G} + \frac{(1-n)\sg^2}{16 n G^2},\\
    &S_n^{(2)}(R) = \frac{A_{2,0}}{4G} + \frac{(1-n)\sg^2}{32 n G^2}, &\tilde{S}_n^{(2)}(R) = \frac{A_{2,0}}{4G} + \frac{(1-n)\sg^2}{16 n G^2}.
\end{align}

Similarly, solving for the location of the diagonal peak $\vec{A^{(D)}}(n)$, we obtain
\begin{equation}\label{eq:Admax}
    A_1^{(D)}(n)=A_2^{(D)}(n)=\frac{A_{1,0}+A_{2,0}}{2}-\frac{(n-1)\sg^2}{8nG}.
\end{equation}
The phase where the diagonal peak dominates will henceforth be called Phase D.
In Phase D, the Renyi entropy and refined Renyi entropy are given by
\begin{align}
    &S_n^{(D)}(R) = \frac{A_{1,0}+A_{2,0}}{8G} + \frac{n \(\Delta A_0\)^2}{4(n-1)\sg^2}-\frac{(n-1)\sg^2}{64 n G^2},&\quad\tilde{S}_n^{(D)}(R) = \frac{A_{1,0}+A_{2,0}}{8G}-\frac{(n-1)\sg^2}{32 n G^2},
\end{align}
where $\Delta A_0 := A_{2,0}-A_{1,0}$.

Having computed the Renyi entropy at each of the candidate peaks, we can now discuss which phases dominate.
Without loss of generality, consider the case where $\vec{A}_0$ lies in the domain $A_{1,0}<A_{2,0}$.
Then, it is easy to see that for $n<1$, the saddle at $\vec{\tilde{A}^{(1)}}(n)$ exits the allowed domain at a critical value
\begin{equation}\label{eq:n*}
    n_* = \frac{1}{1+\frac{4 G \Delta A_0}{\sg^2}},
\end{equation}
where $\Delta A_0 = A_{2,0}-A_{1,0}>0$ and thus, $n_*\in\[0,1\]$.
It is also easy to see from \Eqref{eq:A2max} that the saddle at $\vec{\tilde{A}^{(2)}}(n)$ is not allowed for any $n<1$.
Thus, the maximum is achieved at the diagonal peak $\vec{A^{(D)}}(n)$ for $n<n_*$.
The locus of maxima computing the Renyi entropy at different values of $n$ are shown in \figref{fig:gaussian1}.

\begin{figure}[t]
\centering
\includegraphics[scale=0.4]{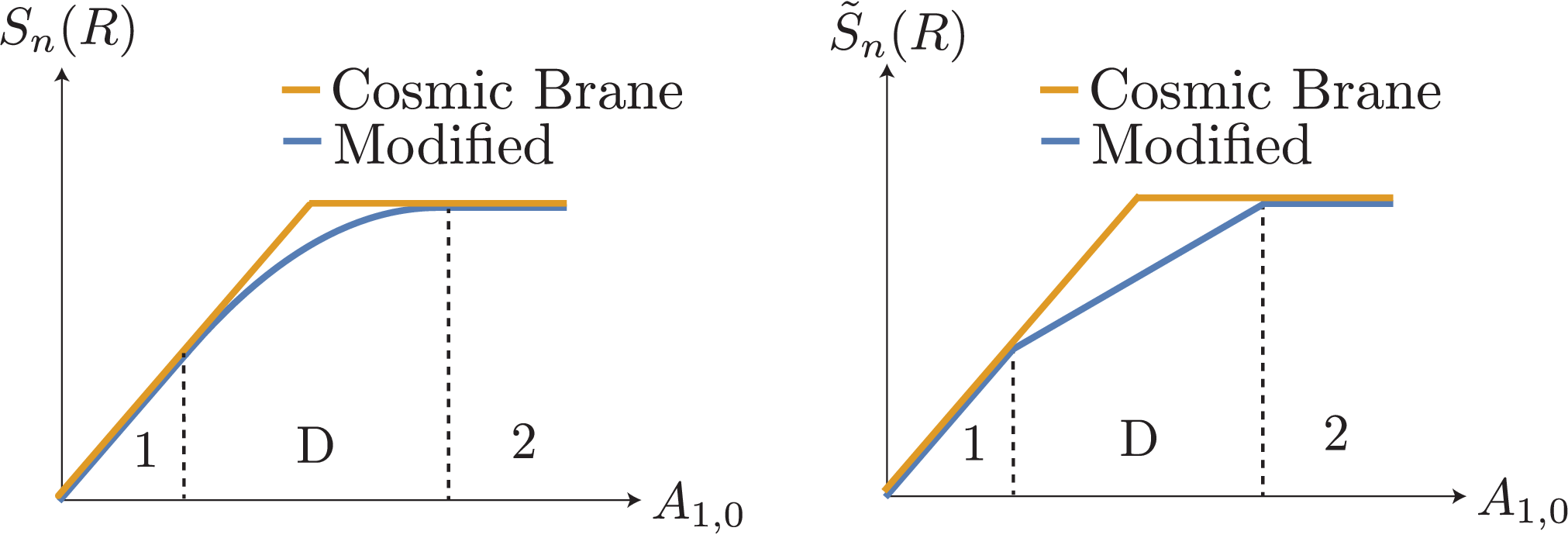}
\caption{The Renyi entropy (left) and refined Renyi entropy (right) computed by the modified cosmic brane proposal at a fixed generic value of $n<1$ is shown as a function of $A_{1,0}$ holding all other parameters fixed, along with the answer predicted by the original cosmic brane proposal. The three phases (1, 2 and D) arising as we move across the entanglement phase transition are labelled.}
\label{fig:gaussian2}
\end{figure}

Note that the existence of these corrections to the original cosmic brane proposal does not rely on being close to an entanglement phase transition, although the nearer we are to a phase transition, the corrections to the original cosmic brane proposal arise closer to $n=1$.
For illustration, in \figref{fig:gaussian2}, we depict the different phases that arise for a fixed generic value of $n<1$ as we increase $A_{1,0}$ across an entanglement phase transition, holding everything else fixed.
The Renyi entropy at $n<1$ is given by
\begin{align}
    S_n(R) = \begin{cases}
        S_n^{(1)}(R)&A_{1,0}<A_{2,0}-\frac{(1-n)\sg^2}{4nG},\\
        S_n^{(D)}(R)&A_{1,0}\in\[A_{2,0}-\frac{(1-n)\sg^2}{4nG},A_{2,0}+\frac{(1-n)\sg^2}{4nG}\],\\
        S_n^{(2)}(R)&A_{1,0}>A_{2,0}+\frac{(1-n)\sg^2}{4nG},
    \end{cases}
\end{align}
and the refined Renyi entropy follows the same pattern of phases.
The results are plotted in \figref{fig:gaussian2}.
Since $\sg=O(\sqrt{G})$ for gravitational states prepared by a smooth path integral, it is clear from this example that there is an $O(1)$ window of areas around the entanglement phase transition, where there are $O\(\frac{1}{G}\)$ corrections to the original cosmic brane proposal.

While we only discussed the simplest example where we can observe leading order corrections to the original cosmic brane proposal, the results for an arbitrary Gaussian are discussed in detail in Appendix~\ref{app:gaussian}.

Before moving on, we note that while this was a toy example, any smooth distribution can be approximated by a Gaussian near its peak.
This was used to analyze the entanglement entropy in Ref.~\cite{Marolf:2020vsi}, since it is universally a good approximation near $n=1$.
Similarly for us, while the details of the distribution will become important as the saddles move away from the peak, the results of this section are applicable in general (not necessarily Gaussian) states for $n\approx1$.

\section{Agreement with Known Results}
\label{sec:agree}

We will now provide evidence for our proposal by demonstrating consistency with known results for the Renyi entropy.
In \secref{sub:PSSY} and \secref{sub:eig}, we apply our formalism to two settings where previous results for the Renyi entropies exist even for $n<1$: that of the PSSY model and high energy eigenstates, respectively.
In each of these cases, large corrections to the Renyi entropy were found for $n<1$ and we shall reproduce this using our formalism. 
By showing consistency with these results, we provide evidence for our modified cosmic brane proposal which we derived by assuming a diagonal approximation.

\subsection{PSSY Model}
\label{sub:PSSY}

\begin{figure}[t]
\centering
\includegraphics[scale=0.7]{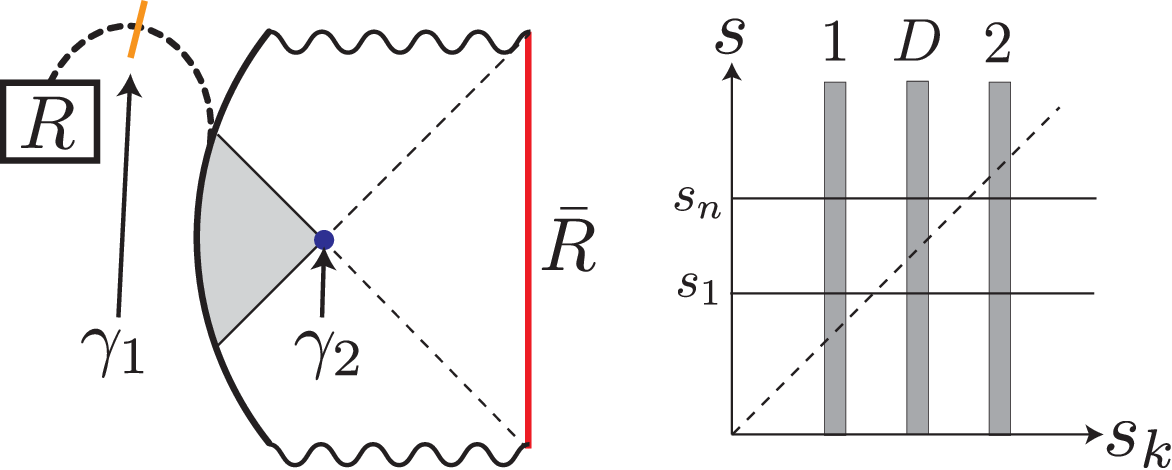}
\caption{(left): The PSSY model involves a JT black hole coupled to EOW branes. The $k$ flavours of the branes are entangled with $R$. The two (quantum) extremal surfaces are $\g_1$ (orange) and the horizon $\g_2$ (blue). (right): The non-zero region of the distribution $p\(s_k,s\)$ represented in terms of variables $\(s_k,s\)$ for the three different phases (labelled). The locations of the saddles $\tilde{s}^{(1)}(n)=s_1$ and $\tilde{s}^{(2)}(n)=s_n$ are marked for reference, assuming $n<1$.\label{fig:PSSY}}
\end{figure}

The PSSY model is a model of an evaporating black hole in JT gravity coupled to end-of-the-world (EOW) branes \cite{Penington:2019kki}. 
The EOW branes have $k$ flavours that are maximally entangled with an auxiliary radiation system $R$ (see \figref{fig:PSSY}).
The Renyi entropy was computed in Ref.~\cite{Akers:2022max}, with large corrections found for $n<1$ (see also Ref.~\cite{Dong:2021oad}).
Here, we show that our proposal reproduces the Renyi entropies precisely.

We will first need to review some basic facts about the model.
The partition function of JT gravity with an asymptotic boundary of renormalized length $\b$ and an EOW brane is given by \cite{Penington:2019kki}
\begin{equation}\label{eq:part}
    Z\(\b\) = \int_0^{\infty}ds\, \rho(s) y(s),
\end{equation}
where $\rho(s)$ is the density of states and $y(s)$ is the Boltzmann weight associated to the thermal spectrum at inverse temperature $\b$. 
For our analysis, we will work in the simplifying limit where the tension of the branes is chosen to be large. 
We then have $\rho(s) \approx e^{S_0+2\pi s}$ and $y(s)\approx e^{-\b s^2/2}$. 
The remaining free parameters in the theory are $k$, $S_0$ and $\b$.
The parameter $S_0$ will be taken to be large in order to suppress higher genus corrections.
Then, the semiclassical limit is controlled by $\b$ which can be rescaled to $\b G$ to restore the dependence on Newton's constant.

The candidate RT surfaces for subregion $R$ are $\g_1$ and $\g_2$ shown in \figref{fig:PSSY}.\footnote{Note that these are really quantum extremal surfaces, but we refer to them as RT surfaces for simplicity.}
To be precise, $\g_1$ is the trivial surface but includes a bulk contribution from the semiclassical entanglement between $R$ and the EOW branes.
Since the bulk entanglement is bipartite, we can follow the discussion in \secref{sec:renyi} to include bulk quantum corrections.
The bulk state already has a flat spectrum, and thus all we need to do is add a contribution of $\log k$ to the area operator at this surface, thus giving us $\frac{A_1}{4G}\equiv \log k$.
The surface $\g_2$ is the horizon of the black hole with area $A_2$ and, in this theory, $\frac{A_2}{4G}$ is interpreted as the value of the dilaton.
To compare with the notation of Ref.~\cite{Akers:2022max}, it will be convenient to parametrize the areas using $\(s_k,s\)$ where $\log k \equiv S_0 +2\pi s_k$ and $\frac{A_{2}}{4G}\equiv S_0+2\pi s$.

In order to apply the modified cosmic brane proposal to compute the Renyi entropy, the remaining ingredient is the probability distribution over areas, which we can obtain using our discussion in \secref{sub:wave}.
The distribution $p\(s_k,s\)$ has support only in a small window around a definite value of $s_k$ since the semiclassical entanglement spectrum is flat.
In order to compute the $s$-dependent part of $p\(s_k,s\)$, we can evaluate the action of a saddle with fixed $s$ and use \Eqref{eq:prob}.
This is straightforward since \Eqref{eq:part} already includes the contribution from all values of $s$ and we simply need to project it to a given value of $s$.
Thus the $s$-dependent part of $p\(s_k,s\)$ is
\begin{equation}
    p\(s_k,s\) \propto \exp\[-\frac{\b}{2}\(s-s_1\)^2\].
\end{equation}
This is a Gaussian peaked at $s=s_1$, which is the $n=1$ value of $s_n:=\frac{2\pi}{n\b}$ (not to be confused with $s_k$).
We depict this distribution for different values of $s_k$ in \figref{fig:PSSY}.

We can now apply the modified cosmic brane proposal for the Renyi entropy, which takes the form
\begin{equation}
    S_n\(R\)= \frac{1}{1-n}\max_{s}\( -\frac{n\b (s-s_1)^2}{2} +(1-n) \(S_0+2\pi \min\[s_k,s\]\)\).
\end{equation}
As discussed before, there are three potential maxima which in this case are given by
\begin{equation}
    \tilde{s}^{(1)}(n) = s_1, \qquad \tilde{s}^{(2)}(n) = s_n, \qquad s^{(D)}(n) = s_k,
\end{equation}
where we have used notation analogous to our previous examples to represent the possible phases and as before, the maxima are only allowed if they lie within the correct domain.

\begin{figure}[t]
\centering
\includegraphics[scale=0.4]{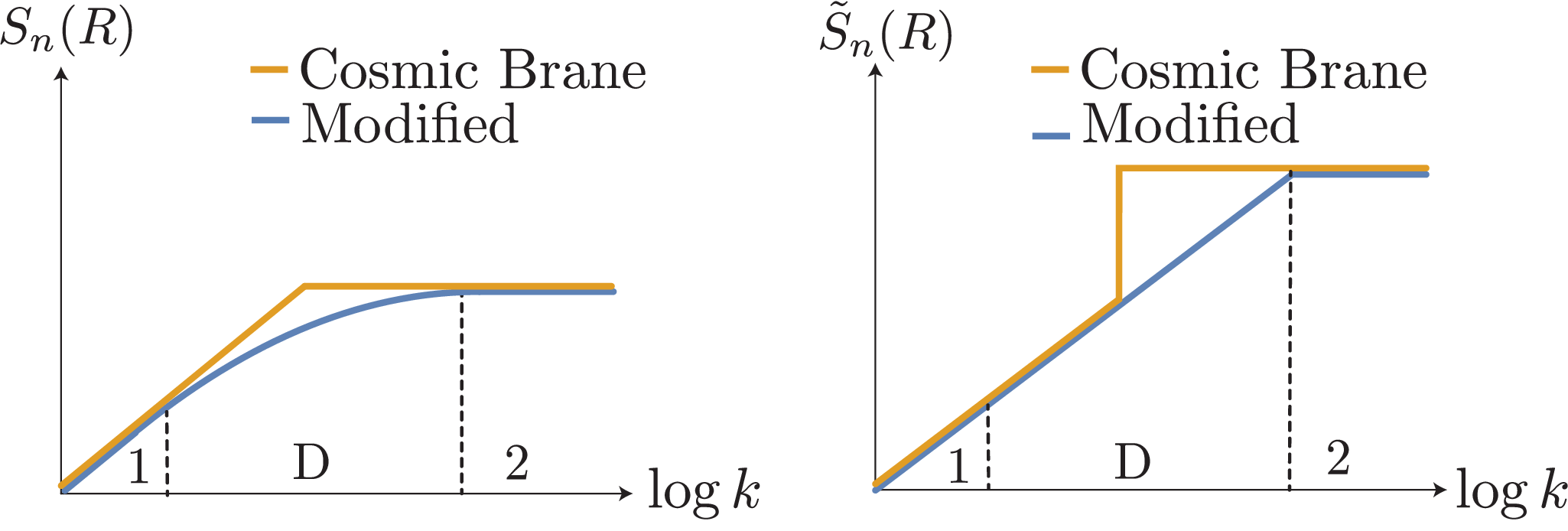}
\caption{The Renyi entropy (left) and refined Renyi entropy (right) computed by the modified cosmic brane proposal at a generic value of $n<1$ is shown as a function of $\log k$ holding all other parameters fixed, along with the answer predicted by the original cosmic brane proposal. The three phases (1, D and 2) arising as we move across the entanglement phase transition are labelled.}
\label{fig:n1}
\end{figure}

We can now divide our analysis into two cases: $n\geq 1$ and $n<1$.
\paragraph{\underline{$n\geq 1$}:}
For $n\geq1$, we obtain two phases, Phase 1 and Phase 2 respectively, i.e.,
\begin{align}
    S_n(R) = \begin{cases}
        \log k &s_k < \frac{\pi}{\b}\(1+\frac{1}{n}\)\\
        S_0+\frac{2\pi^2}{\b}\(1+\frac{1}{n}\)&s_k>\frac{\pi}{\b}\(1+\frac{1}{n}\)
    \end{cases},
\end{align}
which results from a discontinuous jump in the global maxima from $\tilde{s}^{(1)}$ to $\tilde{s}^{(2)}$ at $s_k=\frac{\pi}{\b}\(1+\frac{1}{n}\)$.
Correspondingly, the refined Renyi entropy is given by
\begin{align}
    \tilde{S}_n(R) = \begin{cases}
        \log k &s_k<\frac{\pi}{\b}\(1+\frac{1}{n}\)\\
        S_0+\frac{4\pi^2}{n\b}&s_k>\frac{\pi}{\b}\(1+\frac{1}{n}\)
    \end{cases}.
\end{align}
As expected, this is consistent with the original cosmic brane proposal since Phase D never appears.
\paragraph{\underline{$n<1$}:}
For $n<1$, we obtain all three phases, Phase 1, Phase D, and Phase 2 respectively, i.e.,
\begin{align}
    S_n(R) = \begin{cases}
        \log k &s_k<s_1\\
        S_0+\frac{2\pi s_k-\frac{n\b s_k^2}{2}-\frac{2\pi^2 n}{\b}}{1-n}&s_k\in\[s_1,s_n\]\\
        S_0+\frac{2\pi^2}{\b}\(1+\frac{1}{n}\)&s_k>s_n
    \end{cases},
\end{align}
which results from a continuous shift in the global maxima from $\tilde{s}^{(1)}$ to $s^{(D)}$ at $s_k=s_1$ and from $s^{(D)}$ to $\tilde{s}^{(2)}$ at $s_k=s_n$.
Correspondingly, the refined Renyi entropy is given by
\begin{align}
    \tilde{S}_n(R) = \begin{cases}
        \log k &s_k<s_n\\
        S_0+\frac{4\pi^2}{n \b}&s_k>s_n
    \end{cases}.
\end{align}
Both of these results are plotted in \figref{fig:n1} to contrast with the original cosmic brane proposal.

These are precisely the results obtained by Ref.~\cite{Akers:2022max} and we have reproduced them using the modified cosmic brane proposal.

\subsection{High Energy Eigenstates}
\label{sub:eig}

\begin{figure}[t]
\centering
\includegraphics[scale=0.7]{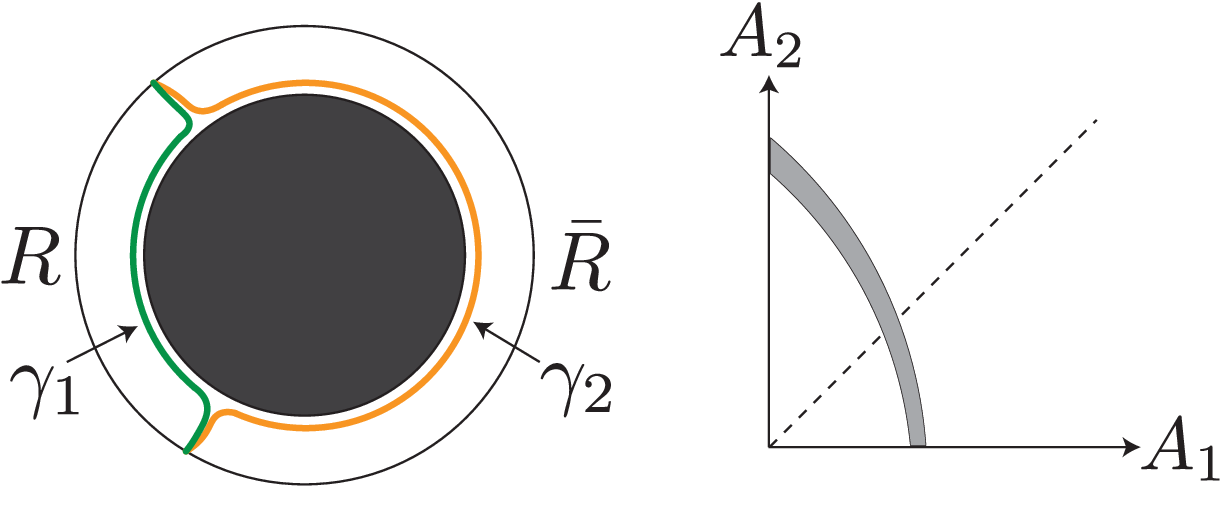}
\caption{(left): The holographic dual of a high energy eigenstate is a black hole geometry in the exterior. We are probing it in a limit where the black hole (shaded black) is large and reaches close to the boundary. The subregion $R$ which has two candidate RT surfaces $\g_1$ (green) and $\g_2$ (orange) which mostly hug the horizon. (right): The non-zero region of the distribution $p\(A_1,A_2\)$.\label{fig:energy}}
\end{figure}

We now consider high energy eigenstates of a single boundary holographic CFT.
The holographic dual of such states is a black hole geometry in the exterior.
In particular, we will be interested in the thermodynamic limit where the black hole is large and reaches close to the boundary as shown in \figref{fig:energy}.

Such a setup was studied for general chaotic theories in Ref.~\cite{Murthy:2019qvb} and was then studied in a holographic context in Ref.~\cite{Dong:2020iod}.
Here, we will apply the entanglement spectrum proposed in Ref.~\cite{Murthy:2019qvb} to holographic CFTs.
Their proposal was that for subregion $R$,
\begin{equation}\label{eq:sred}
    \tr\(\rho_R^n\) = \frac{\int dE_1 \, e^{S_{\text{min}}(E_1)}\[e^{-S(E)} e^{S_{\text{max}}(E_1)}\]^n}{e^{-S(E)}\int dE_1\, e^{S_{\text{min}}(E_1)}e^{S_{\text{max}}(E_1)}},
\end{equation}
where $S_{\text{min}(\text{max})}(E_1)=\min(\max)\[S_1\(E_1\),S_2\(E-E_1\)\]$ and $S_1,S_2$ are the thermodynamic entropies of subregions $R,\bar{R}$ at a given subsystem energy.

For a holographic theory, we can suggestively rewrite \Eqref{eq:sred} as
\begin{equation}\label{eq:sred2}
\begin{split}
    \tr\(\rho_R^n\) \approx \int dA_1 \,dA_2 \,\d\(E-E_1-E_2\) \tilde{p}\(E_1,E_2\)^n \exp\((1-n)\min\[\frac{A_1}{4G},\frac{A_2}{4G}\]\),
\end{split}
\end{equation}
where we have defined $\tilde{p}\(E_1,E_2\)=e^{-S(E)} e^{S_1(E_1)}e^{S_2(E_2)}$.
Moreover, we have used $S_1(E_1)\approx \frac{A_1}{4G}$ and $S_2(E_2)\approx \frac{A_2}{4G}$ since the geometry is approximately that of a large black hole in the exterior, identical to the thermal state.
In the thermodynamic limit, the areas of the RT surfaces are dominated by the volume-law term that comes from the portion hugging the horizon, purely determined by the subsystem energy.\footnote{For this discussion, we are subtracting out the divergent area contribution coming from the near boundary region, which is subleading in the thermodynamic limit where we take the volume large while keeping the cutoff finite.}
Since the area increases monotonically with energy, we have inverted this relation to implicitly express the subsystem energies $E_1$ and $E_2$ as functions of the areas $A_1$ and $A_2$.
The $\d$ function in \Eqref{eq:sred2} should be understood as a sharply peaked window function controlling the width of the microcanonical ensemble to obtain a semiclassical spacetime.
With this understanding, we can read off the probability distribution,
\begin{equation}
 p\(A_1,A_2\) \equiv \tilde{p}\(E_1 (A_1),E_2(A_2)\)\d\(E-E_1(A_1)-E_2(A_2)\),
\end{equation}
which is supported on a codimension-1 region in the $\(A_1,A_2\)$ parameter space as shown in \figref{fig:energy}.
In doing so, we have ignored any subleading contributions from the Jacobian as well as the window function.
Note that it should also be possible to derive the probability distribution directly from the gravitational path integral using the techniques of Ref.~\cite{Dong:2020iod}.

With this understanding, \Eqref{eq:sred} in the saddle point approximation is equivalent to our diagonal approximation \Eqref{eq:diag2}.
Consequently, our modified cosmic brane proposal which follows from \Eqref{eq:diag2} leads to identical results for the Renyi entropies obtained in Ref.~\cite{Murthy:2019qvb}.
In particular, the leading order corrections to the $n<1$ Renyi entropy seen in Ref.~\cite{Murthy:2019qvb} can be explained by our proposal.

\section{Discussion}
\label{sec:disc}

To summarize, we have offered a modified cosmic brane proposal to compute Renyi entropies in holographic systems that should be interpreted as an update to the Lewkowycz-Maldacena proposal \cite{Lewkowycz:2013nqa}.
Our modified proposal reproduces all previously known results for the holographic Renyi entropy: it always agrees with the original cosmic brane proposal for $n\geq 1$ and explains leading order corrections to the $n<1$ Renyi entropy found in various situations.
We now comment on various aspects of our work and possible future directions.

\begin{figure}[t]
\centering
\includegraphics[scale=0.6]{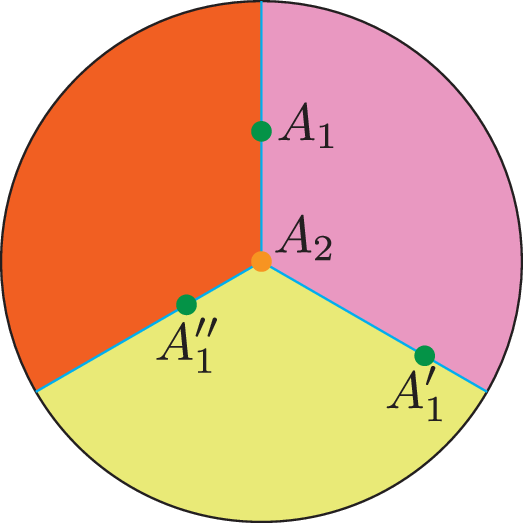}
\caption{A replica-symmetry breaking contribution to $\tr\(\rho_R^n\)$ (for $n=3$) that arises from the off-diagonal terms in \Eqref{eq:rho_diag}. The Euclidean saddle is built up from constituent fixed-area saddles and has conical defects (coloured circles) at the extremal surfaces. The surface $\g_2$ is identified and thus has a unique area, whereas the surfaces $\g_1$ have different areas in different copies. \label{fig:RSB}}
\end{figure}

A basic point we would like to emphasize here is that while we discussed RT surfaces in the paper, all of our results should naturally apply to HRT surfaces \cite{Hubeny:2007xt} in time-dependent settings as well.
Notably, the surfaces $\g_1$ and $\g_2$ are spacelike separated and thus, their areas can be simultaneously fixed even in time dependent situations.

An important future direction is to understand whether the off-diagonal terms in \Eqref{eq:rho_diag} can be shown to be negligible in general.
An example of the contribution of such a term is depicted in \figref{fig:RSB}.
It would be interesting to apply gravitational path integral arguments analogous to those in Ref.~\cite{Colafranceschi:2023txs} to prove that replica-symmetry breaking terms are indeed subleading.
We expect that this would provide a general justification for the Lewkowycz-Maldacena assumption.
We leave this analysis for future work.

Another important future direction is to understand bulk quantum corrections more generally, especially to $O(1)$.
As discussed in \secref{sec:renyi}, for a bipartitely entangled state, the bulk entropy simply modifies the definition of area and in this case, we know that the effect of replica-symmetry breaking terms is subleading.
For a general bulk state, if we ignore replica-symmetry breaking contributions then \Eqref{eq:rhoRfixed}, including bulk quantum corrections, becomes
\begin{align}
\begin{aligned}
    \tr\(\rho_R\(A_1,A_2\)^n\)&= \exp\(-I\[g_n\]+nI\[g_1\]\)\\
    &= \exp\((1-n)\min\[\frac{A_1}{4G}+S_{n,\text{bulk}}(\g_1),\frac{A_2}{4G}+S_{n,\text{bulk}}(\g_2)\]\)\label{eq:quantum},
\end{aligned}
\end{align}
where $S_{n,\text{bulk}}(\g_i)$ represents the bulk Renyi entropy for the entanglement wedge defined by $\g_i$ in the fixed-area state $|A_1,A_2\rangle$ defined by the gravitational path integral.
Using the diagonal approximation \Eqref{eq:diag1}, now including bulk quantum corrections, we get:
\begin{align}\label{eq:renyi_bulk}
    S_n^{MC}(R) =        \frac{1}{1-n}\log\[\sum_{A_1,A_2}\(p\(A_1,A_2\)^n \exp\[(1-n)\min_{i=1,2}\(\frac{A_i}{4G}+S_{n,\text{bulk}}(\g_i)\)\]\)\].
\end{align}
It is important in this case to use the diagonal approximation \Eqref{eq:diag1} instead of the saddle point approximation \Eqref{eq:diag2} since there are $O(\log G)$ corrections when we drop the sum.
It would be interesting in the future to analyze the bulk quantum corrections and see if the assumptions of diagonal approximation and replica symmetry that enter \Eqref{eq:renyi_bulk} are valid.

Another interesting aspect to be explored is the invariance of the Renyi entropies under bulk renormalization group (RG) flow \cite{Susskind:1994sm}.\footnote{See Refs.~\cite{Solodukhin:2011gn,Bousso:2015mna,Dong:2023bax} for some discussion of this issue.}
In particular, for $n<1$, we have seen that the diagonal saddle with $A_1=A_2$ can dominate even for such smooth states.
Such a saddle generically corresponds to conical defect opening angles different from $\frac{2\pi}{n}$ at the extremal surfaces.
While one abstractly expects invariance under bulk RG flow in this case as well, the details would require carefully defining the area operator with a UV cutoff and understanding how it evolves under RG flow.
It will be interesting to understand this in more detail in the future.

\begin{acknowledgments}

We are very grateful to Geoff Penington for extremely helpful discussions. 
XD is supported in part by the U.S. Department of Energy, Office of Science, Office of High Energy Physics, under Award Number DE-SC0011702 and by funds from the University of California. 
JKF is supported by the Marvin L.~Goldberger Member Fund at the Institute for Advanced Study and the National Science Foundation under Grant No. PHY-2207584. 
PR is supported in part by a grant from the Simons Foundation, by funds from UCSB, the Berkeley Center for Theoretical Physics; by the Department of Energy, Office of Science, Office of High Energy Physics under QuantISED Award DE-SC0019380, under contract DE-AC02-05CH11231 and by the National Science Foundation under Award Number 2112880. 
This material is based upon work supported by the Air Force Office of Scientific Research under award number FA9550-19-1-0360.

\end{acknowledgments}

\appendix
\section{Detailed Analysis of the Gaussian Example}
\label{app:gaussian}

We will now analyze the Gaussian distribution example of \secref{sec:gaussian} more exhaustively.
We remind the reader that the probability distribution over the areas is
\begin{equation}
    p\(A_1,A_2\) = \exp\[-\frac{1}{2}\(\vec{A}-\vec{A}_0\)\cdot C^{-1}\cdot \(\vec{A}-\vec{A}_0\)\],
\end{equation}
where $\vec{A}=\(A_1,A_2\)$, $\vec{A}_0=\(A_{1,0},A_{2,0}\)$ represents the area vector at the peak of the distribution, and $C$ is the covariance matrix given by
\begin{equation}
    C=\begin{pmatrix} 
  \sg_1^2     & r \sg_1 \sg_2\\ 
  r \sg_1 \sg_2 & \sg_2^2
\end{pmatrix},
\end{equation}
where $r\in\[-1,1\]$.

The maxima $\vec{\tilde{A}^{(1)}}(n)$ and $\vec{\tilde{A}^{(2)}}(n)$ are given by
\begin{align}
    \vec{\tilde{A}^{(1)}}(n) &= \vec{A}_0 - \frac{n-1}{4 G n}\begin{pmatrix} 
  \sg_1^2 \\ 
  r \sg_1 \sg_2 
\end{pmatrix},\\
\vec{\tilde{A}^{(2)}}(n) &= \vec{A}_0 - \frac{n-1}{4 G n}\begin{pmatrix} 
  r \sg_1 \sg_2 \\ 
  \sg_2^2 
\end{pmatrix}.
\end{align}
If these saddles dominate, then the corresponding Renyi entropies are given by
\begin{align}
    S_n^{(1)}(R) &= \frac{A_{1,0}}{4G} + \frac{(1-n)\sg_1^2}{32 n G^2}, \\
    S_n^{(2)}(R) &= \frac{A_{2,0}}{4G} + \frac{(1-n)\sg_2^2}{32 n G^2},
\end{align}
and the corresponding refined Renyi entropies are given by
\begin{align}
    \tilde{S}_n^{(1)}(R) &= \frac{A_{1,0}}{4G} + \frac{(1-n)\sg_1^2}{16 n G^2}, \\
    \tilde{S}_n^{(2)}(R) &= \frac{A_{2,0}}{4G} + \frac{(1-n)\sg_2^2}{16 n G^2}.
\end{align}
Finally, we have the candidate peak at $\vec{A^{(D)}}(n)$, given by
\begin{equation}
A_1^{(D)}=A_2^{(D)}=\frac{A_{1,0}\sg_2(\sg_2-r\sg_1)}{\sg_{-}^2} + \frac{A_{2,0}\sg_1(\sg_1-r\sg_2)}{\sg_{-}^2}+\frac{(1-n)(1-r^2)\sg_1^2 \sg_2^2}{4 n G \sg_{-}^2},
\end{equation}
where $\sg_{-}^2 = \sg_1^2+\sg_2^2-2 r \sg_1 \sg_2 \geq 0$.
The Renyi entropies in the diagonal phase are given by
\begin{equation}
    S_n^{(D)}(R) = -\frac{r\sg_1\sg_2\(A_{1,0}+A_{2,0}\)}{4G\sg_{-}^2}+\frac{\frac{2Gn\(\Delta A_0\)^2}{n-1}+A_{2,0}\sg_1^2+A_{1,0}\sg_2^2}{4G\sg_{-}^2}+\frac{\sg_1^2 \sg_2^2 (n-1) \left(r^2-1\right) }{32 G^2 n \sg_{-}^2},
\end{equation}
where $\Delta A_0=A_{2,0}-A_{1,0}$, and the corresponding refined Renyi entropies are
\begin{equation}
    \tilde{S}_n^{(D)}(R) = \frac{A_{1,0}\sg_2(\sg_2-r\sg_1)}{4G\sg_{-}^2} + \frac{A_{2,0}\sg_1(\sg_1-r\sg_2)}{4G\sg_{-}^2}+\frac{(1-n)(1-r^2)\sg_1^2 \sg_2^2}{16 n G^2 \sg_{-}^2}.
\end{equation}

\begin{figure}[t]
\centering
\includegraphics[scale=0.7]{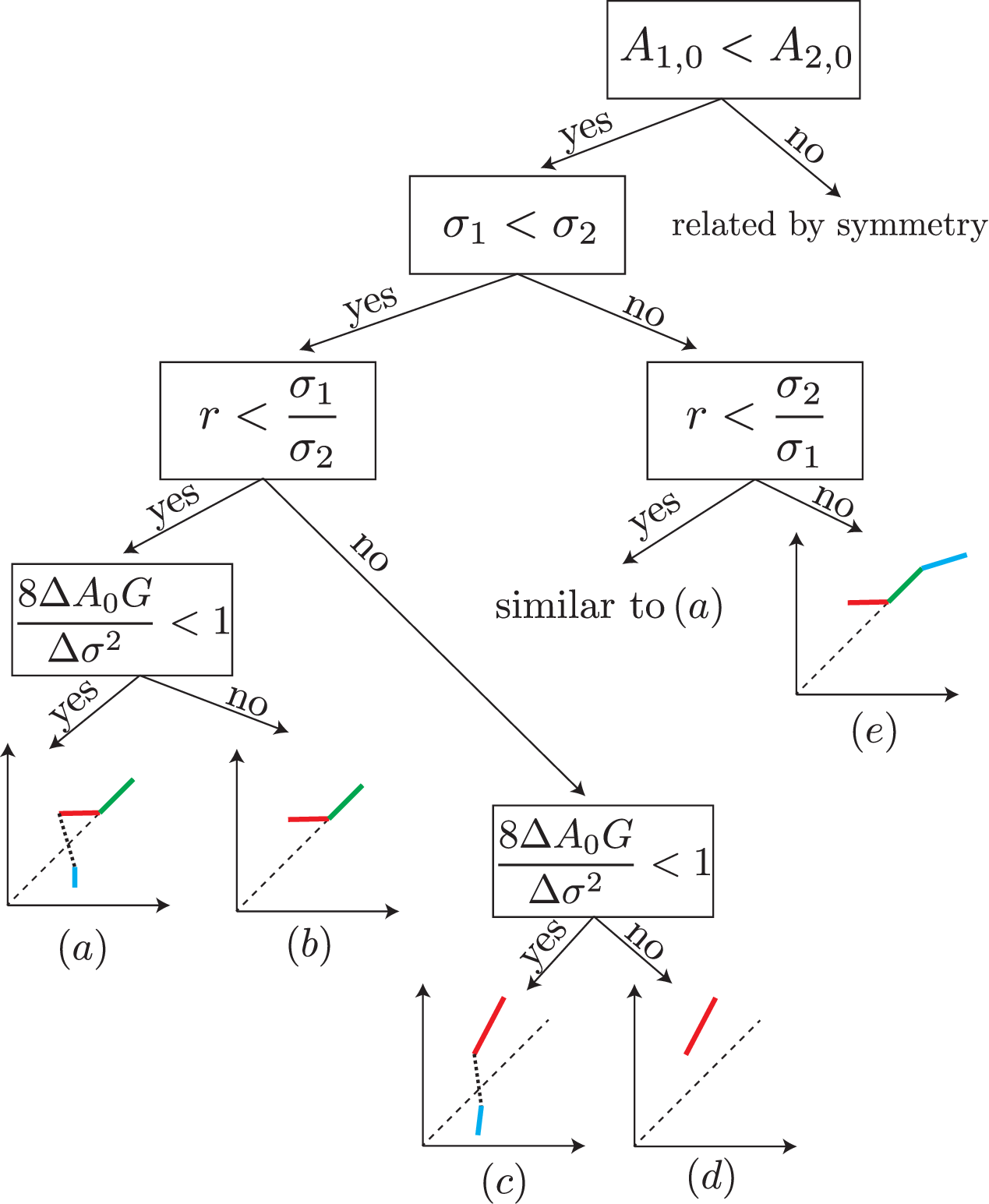}
\caption{The flowchart of all the qualitatively different cases is depicted. The schematic location on the $(A_1,A_2)$ plane of the dominant saddle as a function of $n$ is shown for each case with $\vec{\tilde{A}^{(1)}}(n)$ (red), $\vec{\tilde{A}^{(2)}}(n)$ (blue), and $\vec{A^{(D)}}(n)$ (green). Dotted black lines indicate a discontinuous jump in the peak.\label{fig:flow}}
\end{figure}

With these expressions in hand, we can now analyze the different possible cases.
Without loss of generality, consider the case where $\vec{A}_0$ lies in the domain $A_1<A_2$.
The qualitatively different cases that we can consider are shown in the flowchart in \figref{fig:flow}.

\paragraph{Case (a):} When $r<\frac{\sg_1}{\sg_2}$, there is a transition from Phase 1 to Phase D at a critical value $n^{(1)}_*\in\[0,1\]$ given by
\begin{equation}
    n^{(1)}_* = \frac{1}{1+\frac{4 G \Delta A_0}{\sg_1(\sg_1-r \sg_2)}}.
\end{equation}
On the other hand, if $8\Delta A_0 G< \Delta \sg^2$ where $\Delta \sg^2=\sg_2^2-\sg_1^2$, then there is a transition from Phase 1 to Phase 2 at a critical value $\tilde{n}_*>1$ given by
\begin{equation}
    \tilde{n}_* = \frac{1}{1-\frac{8 G \Delta A_0}{\Delta \sg^2}}.
\end{equation}
In summary, we have
\begin{align}
    S_n(R) = \begin{cases}
        S_n^{(2)}(R)&n>\tilde{n}_*\\
        S_n^{(1)}(R)&n\in[n^{(1)}_*,\tilde{n}_*]\\
        S_n^{(D)}(R)&n\in[0,n^{(1)}_*]
    \end{cases}.
\end{align}

\paragraph{Case (b):} When $r<\frac{\sg_1}{\sg_2}$ but $8\Delta A_0 G > \Delta \sg^2$, then the transition from Phase 1 to Phase 2 no longer happens since $\tilde{n}_*<0$.
Thus, we have
\begin{align}
    S_n(R) = \begin{cases}
        S_n^{(1)}(R)&n>n^{(1)}_*\\
        S_n^{(D)}(R)&n\in[0,n^{(1)}_*]
    \end{cases}.
\end{align}

\paragraph{Case (c):} This is an interesting case where we can potentially lose the saddle $\vec{\tilde{A}^{(1)}}(n)$ at a critical value $n^{(1)}_*>1$.
However, as long as $8\Delta A_0 G< \Delta \sg^2$, there is a transition to the saddle $\vec{\tilde{A}^{(2)}}(n)$ which starts dominating at $n=\tilde{n}_*\in\[1,n^{(1)}_*\]$ before the Phase 1 saddle is lost.
No transitions happen at $n<1$.
Thus, we have
\begin{align}
    S_n(R) = \begin{cases}
        S_n^{(1)}(R)&n\in\[0,\tilde{n}_*\]\\
        S_n^{(2)}(R)&n>\tilde{n}_*
    \end{cases}.
\end{align}

\paragraph{Case (d):} In this case, there are no transitions and we have Phase 1 for all values of $n\in\[0,\infty\]$.

\paragraph{Case (e):} If $r \sg_1 > \sg_2$ (note that this requires $\sg_1>\sg_2$ since $r<1$), then we have two transitions.
At $n=n^{(1)}_*<1$, there is a transition from Phase 1 to Phase D.
Then, at another critical value $n=n^{(2)}_*<n^{(1)}_*$ given by
\begin{equation}
     n^{(2)}_* = \frac{1}{1-\frac{4 G \Delta A_0}{\sg_2(\sg_2-r \sg_1)}},
\end{equation}
there is a continuous transition from Phase D to Phase 2.
Thus, we have
\begin{align}
    S_n(R) = \begin{cases}
        S_n^{(1)}(R)&n>n^{(1)}_*\\
        S_n^{(D)}(R)&n\in[n^{(2)}_*,n^{(1)}_*]\\
        S_n^{(2)}(R)&n\in[0,n^{(2)}_*]
    \end{cases}.
\end{align}

\addcontentsline{toc}{section}{References}
\bibliographystyle{JHEP}
\bibliography{apssamp}

\end{document}